\newif\ifpublishing
\newif\ifdomain
\newif\ifregex
\newif\ifnotationsummary
\newif\iffullexample
\newif\ifshrinkmode
\newif\ifshowtmpgraphs
\newif\iftacode
\DeclareMathOperator*{\argmax}{arg\,max} 
\begin{document}
\title{Crowdsourced Truth Discovery in the Presence of Hierarchies for Knowledge Fusion}

\author{Woohwan Jung}
\affiliation{%
	\institution{Seoul National University}
}
\email{whjung@kdd.snu.ac.kr}

\author{Younghoon Kim}
\affiliation{%
	\institution{Hanyang University}
}
\email{yhkim7951@gmail.com}

\author{Kyuseok Shim}
\authornote{Corresponding author}
\affiliation{%
	\institution{Seoul National University}
}
\email{shim@kdd.snu.ac.kr}
\renewcommand{\shortauthors}{}

\begin{abstract}
	%
	Existing works for truth discovery in categorical data usually assume that claimed values are mutually exclusive and only one among them is correct.
	However, many claimed values are not mutually exclusive even for functional predicates due to their hierarchical structures.
	Thus, we need to consider the hierarchical structure to effectively estimate the trustworthiness of the sources and infer the truths.
	We propose a probabilistic model to utilize the hierarchical structures and an inference algorithm to find the truths.
	In addition, in the knowledge fusion, the step of automatically extracting information from unstructured data (e.g., text) generates a lot of false claims. 
	To take advantages of the human cognitive abilities in understanding unstructured data, we utilize crowdsourcing to refine the result of the truth discovery.
	We propose a task assignment algorithm to maximize the accuracy of the inferred truths. 
	The performance study with real-life datasets confirms the effectiveness of our truth inference and task assignment algorithms.
\end{abstract}

%
%

\newcommand{\leqmid}{\!\leq\!}
\newcommand{\eqmid}{\!=\!}
\newcommand{\eqshort}{\mspace{2mu}\mathsf{=}\mspace{2mu}}
\newcommand{\eqtiny}{\mathsf{=}}
\newcommand{\cdotmid}{\!\cdot\!}
\newcommand{\plusmid}{\!+\!}
\newcommand{\minusmid}{\!-\!}
\newcommand{\inmid}{\!\in\!}
\newcommand{\midsize}[1]{\!#1\!}
\newcommand{\avgdist}{\emph{AvgDistance}\xspace}
\newcommand{\acc}{\emph{Accuracy}\xspace}
\newcommand{\accgen}{\emph{GenAccuracy}\xspace}
\newcommand{\accs}{\emph{Accuracies}\xspace}
\newcommand{\accgens}{\emph{GenAccuracies}\xspace}

\newcommand{\birthplaces}{\emph{BirthPlaces}\xspace}
\newcommand{\heritages}{\emph{Heritages}\xspace}
\newcommand{\todomessage}[1]{TODO ***** #1 *****}
\newcommand{\eat}[1]{}
\newcommand{\red}[1]{{#1}}

\newcommand{\minisection}[1]{\vspace{0.05in}{\bf \noindent #1:}}
\newsavebox\CBox
\def\textBF#1{\sbox\CBox{#1}\resizebox{\wd\CBox}{\ht\CBox}{\textbf{#1}}}

\maketitle

\section{Introduction}
\label{intro}

Automatic construction of large-scale knowledge bases is very important for the communities of database and knowledge management.
Knowledge fusion (KF) \cite{KnowledgeFusion} is one of the methods used to automatically construct knowledge bases \red{(a.k.a. knowledge harvesting)}.
It collects the possibly conflicting values of objects from data sources and applies \emph{truth discovery} techniques for resolving the conflicts in the collected values.
Since the values are extracted from unstructured or semi-structured data, the collected information exhibits error-prone behavior.
The goal of the \emph{truth discovery} used in knowledge fusion is to infer the true value of each object from the noisy observed values retrieved from multiple information sources \red{while simultaneously estimating the reliabilities of the sources}.
\red{
Two potential applications of knowledge fusion are web source trustworthiness estimation and data cleaning \cite{dong2015knowledge}.
By utilizing truth discovery algorithms, we can evaluate the quality of web sources and find systematic errors in data curation by analyzing the identified wrong values.
}


\minisection{Truth discovery with hierarchies}
As pointed out in \cite{KnowledgeVault,KnowledgeFusion,TDSurveryKDDExp15}, the extracted values can be hierarchically structured.
In this case, there may be multiple correct values in the hierarchy for an object even for functional predicates and we can utilize them to find the most specific correct value among the candidate values. 
For example, consider the three claimed values of `NY', `Liberty Island' and `LA' about the location of the Statue of Liberty in Table~\ref{expl_records}.
Because Liberty Island is an island in NY, `NY' and `Liberty Island' do not conflict with each other. 
Thus, we can conclude that the Statue of Liberty stands on Liberty Island in NY.

We also observed that many sources provide generalized values in the real-life.
Figure~\ref{fig:spread_data} shows the graph of the generalized accuracy against the accuracy of the sources in the real-life datasets \birthplaces and \heritages used for experiments in Section~\ref{sec:experiment}.
The accuracy and the generalized accuracy of a source are the proportions of the exactly correct values and hierarchically-correct values among all claimed values, respectively.
If a source claims exactly correct values without generalization, it is located at the dotted diagonal line in the graph.
This graph shows that many sources in real-life datasets claim with generalized values and each source has its own tendency of generalization when claiming values. 

Most of the existing methods \cite{zhao2012bayesian,LCA,DOCS,dong2009integrating, dong2012less} simply regard the generalized values of a correct value as incorrect.
Thus, it causes a problem in estimating the reliabilities of sources.
According to \cite{KnowledgeFusion}, 35\% of the false negatives in the data fusion task are produced by ignoring such hierarchical structures. 
Note that there are many publicly available hierarchies such as WordNet \cite{wordnet} and DBpedia \cite{auer2007dbpedia}.
\eat{Since there are many publicly available hierarchies such as WordNet \cite{wordnet} and DBpedia \cite{auer2007dbpedia}, we can easily utilize such available hierarchies to infer the truths from conflicting claimed values.}
Thus, a truth discovery algorithm to incorporate hierarchies is proposed in \cite{asums}.
However, it does not consider the different tendencies of generalization and may lead to the degradation of the accuracy.
Another drawback is that it needs a threshold to control the granularity of the estimated truth.

\eat{
Figure~\ref{fig:spread_data} shows the graph of the generalized accuracy against the accuracy of the sources in the real-life datasets \birthplaces and \heritages used for experiments in Section~\ref{sec:experiment}.
The accuracy and the generalized accuracy of a source are the proportions of the exactly correct values and hierarchically-correct values among all claimed values, respectively.
If a source claims exactly correct values without generalization, it is located at the dotted diagonal line in the graph.
As shown in Figure~\ref{fig:spread_data}, there are many sources who claim hierarchically correct values.
In addition, each source has its own tendency of generalization when claiming values. }

\begin{table}[tb]
	\fontsize{7}{7}\selectfont
	\renewcommand{\arraystretch}{1.3}
	\caption{Locations of tourist attractions}
	\vspace{-0.0in}
	\label{expl_records}
	\centering
	\begin{tabular}{c|c|c}
		\hline
		\bfseries Object & \bfseries Source & \bfseries Claimed value\\
		\hline
		Statue of Liberty & UNESCO & NY \\
		Statue of Liberty & Wikipedia & Liberty Island \\
		Statue of Liberty & Arrangy & LA \\
		Big Ben& Quora & Manchester\\
		Big Ben& tripadvisor & London\\
		\hline
	\end{tabular}
	\vspace{-0.0in}
\end{table}

\begin{figure}[b]
	\vspace{-0.08in}
	\centering
	\includegraphics[width=2.35in]{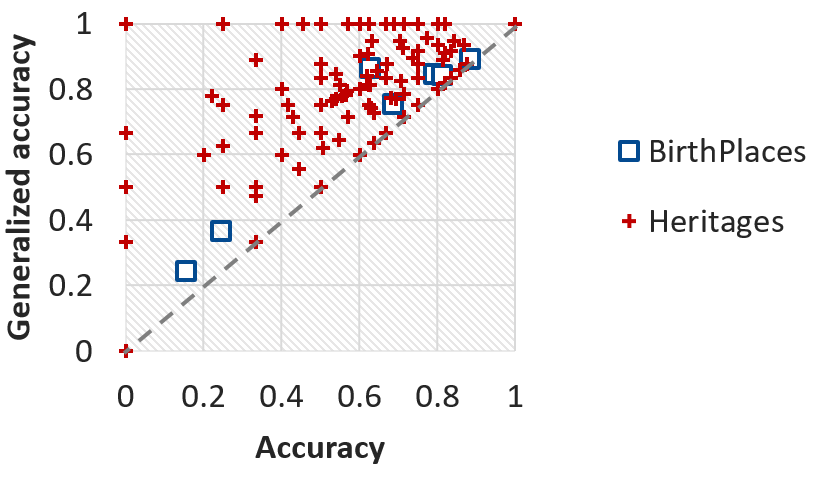}
	\vspace{-0.1in}
	\caption{Generalization tendencies of the sources}
	\label{fig:spread_data}
\end{figure}


\red{We propose a novel probabilistic model to capture the different generalization tendencies shown in \figurename~\ref{fig:spread_data}.
Existing probabilistic models \cite{LCA,DOCS,dong2009integrating, dong2012less} basically assume two interpretations of a claimed value (i.e., correct and incorrect).}
\eat{Thus, the existing works represent the quality of a source by using the probability of providing correct values.}
\red{By introducing three interpretations of a claimed value (i.e., exactly correct, hierarchically correct, and incorrect), our proposed model represents the generalization tendency and reliability of the sources.} 
\minisection{Crowdsourced truth discovery}
\eat{In knowledge fusion, i}It is reported in \cite{KnowledgeFusion} that upto 96\% of the false claims are made by extraction errors rather than by the sources themselves.
Since crowdsourcing is an efficient way to utilize human intelligence with low cost,
it has been successfully applied in various areas of data integration such as schema matching \cite{fan2014hybrid}, entity resolution \cite{wang2012crowder}, graph alignment \cite{kim2017integration} and truth discovery \cite{zheng2015qasca, DOCS}.
Thus, we utilize crowdsourcing to improve the accuracy of the truth discovery.

\red{It is essential in practice to minimize} the cost of crowdsourcing by assigning proper tasks to workers. 
A popular approach for selecting queries in active learning is \emph{uncertainty sampling} \cite{lewis1994sequential, boim2012asking, KimKS17, DOCS}. 
It asks a query to reduce the uncertainty of the confidences on the candidate values the most.
\eat{It is not, however, a proper method for truth discovery since}
However, it considers only the uncertainty regardless of the accuracy improvement.
QASCA algorithm \cite{zheng2015qasca} asks a query with the highest accuracy improvement, 
but measures the improvement without considering the number of collected claimed values.
It can be inaccurate since an additional answer may be less informative for an object which already has many records and answers.

\red{Assume that there are two candidate values of an object with equal confidences.
	If only a few sources provide the claimed values for the object, an additional answer from a crowd worker will significantly change the confidence distribution.
	Meanwhile, if hundreds of sources already provide the claimed values for the object,
	the influence of an additional answer is likely to be very little.
	Thus, we need to consider the number of collected answers as well as the current confidence distribution.}
Based on the observation, we develop a new method to estimate the increase of accuracy more precisely by considering the number of collected records and answers.
We also present an incremental EM algorithm to quickly measure the accuracy improvement and propose a pruning technique to efficiently assign the tasks to workers.

\eat{By combining the proposed task assignment and truth inference algorithms that are superior to the existing methods, our crowdsourced hierarchical truth discovery algorithm is able to infer the truth much more accurately.}



\ifdomain

In crowdsourcing platforms, workers may have different levels of knowledge for each different domain. 
For example, movie fans are able to answer the questions related to movies very well while they may not answer well with the questions about economics.
Recently, it has been proven by experiments in \cite{DOCS} that truth discovery algorithms can be improved by considering the domains of objects.
We thus extend our proposed model to utilize the domain information of objects whenever such information is available.
\fi

\minisection{An overview of our truth discovery algorithm}
By combining the proposed task assignment and truth inference algorithms, we develop a novel \emph{crowdsourced truth discovery algorithm using hierarchies}.
As illustrated in Figure~\ref{fig:td_in_KF}, our algorithm consists of two components: \emph{hierarchical truth inference} and \emph{task assignment}.
The hierarchical truth inference algorithm finds the correct values from the conflicting values, which are collected from different sources and crowd workers, using hierarchies.
The task assignment algorithm distributes objects to the workers who are likely to increase the accuracy of the truth discovery the most.
The proposed \emph{crowdsourced truth discovery algorithm} repeatedly alternates the truth inference and task assignment until the budget of crowdsourcing runs out.
\red{
	As discussed in \cite{li2016crowdsourced}, some workers answer slower than others and increase the latency.
	However, we do not investigate how to reduce the latency in this work since we can utilize the techniques proposed in \cite{haas2015clamshell}.
}

\begin{figure}[!t]
\centering
\includegraphics[width=3.1in]{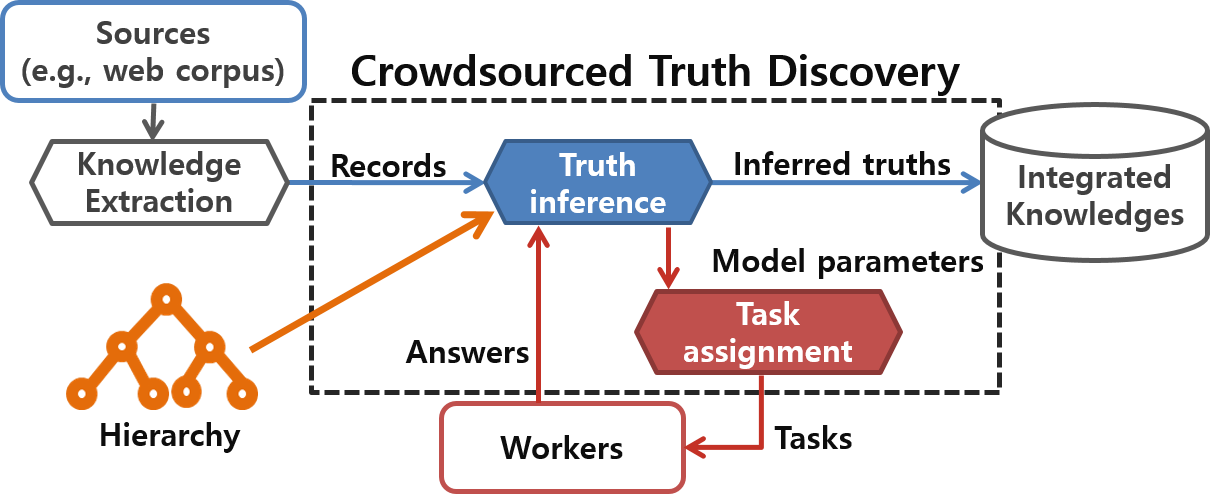}
\vspace{-0.1in}
\caption{Crowdsourced truth discovery in KF}
\vspace{-0.1in}
\label{fig:td_in_KF}
\end{figure}


\iffalse
The contributions of this paper are summarized below.
\begin{itemize}[topsep=3pt,itemsep=1ex,partopsep=1ex,parsep=1ex]
\item We propose a truth inference algorithm which utilizes hierarchical structures in claimed values. 
Note that this is the first work which considers both the reliabilities and the tendencies of generalization of sources.
\item We define a quality measure for a task that will improve the accuracy of truth discovery the most and devise an efficient algorithm to compute the measure.
We also proposed an efficient task assignment algorithm for multiple crowd workers based on the quality measure.
\ifdomain
\item We also extend our truth discovery algorithm to incorporate the expertise of workers in different domains whenever such domains of objects are available.
\fi
\item We empirically show that the proposed algorithm outperforms the existing works with extensive experiments on real-life datasets.   
\end{itemize}
\else

\minisection{Our contributions}
\eat{We propose a truth inference algorithm which utilizes the hierarchical structures in claimed values. 
To the best of our knowledge, this is the first work which considers both the reliabilities and the generalization tendencies of the sources.
To assign a task which will most improve the accuracy, we develop an incremental EM algorithm to estimate the accuracy improvement by a task.
In addition, we devise an efficient task assignment algorithm for multiple crowd workers based on the quality measure.
We empirically show with extensive experiments on real-life datasets that the proposed algorithm outperforms the existing works. }
\fi
\eat{
We propose a truth inference algorithm utilizing the hierarchical structures in claimed values.
	To the best of our knowledge, it is the first work which considers both the reliabilities and the generalization tendencies of the sources.
 To assign a task which will most improve the accuracy, we develop an incremental EM algorithm to estimate the accuracy improvement for a task by considering the number of claimed values as well as the confidence distribution. 
	We also devise an efficient task assignment algorithm for multiple crowd workers based on the quality measure.
 We empirically show with extensive experiments on real-life datasets that the proposed algorithm outperforms the existing works. }
\red{
The contributions of this paper are summarized below.
\begin{itemize}
	\item We propose a truth inference algorithm utilizing the hierarchical structures in claimed values.
	To the best of our knowledge, it is the first work which considers both the reliabilities and the generalization tendencies of the sources.
	\item To assign a task which will most improve the accuracy, we develop an incremental EM algorithm to estimate the accuracy improvement for a task by considering the number of claimed values as well as the confidence distribution. 
	 We also devise an efficient task assignment algorithm for multiple crowd workers based on the quality measure.
	\item We empirically show that the proposed algorithm outperforms the existing works with extensive experiments on real-life datasets. 
\end{itemize}
}
\eat{
The rest of this paper is organized as follows. 
After discussing related work in Section~\ref{sec:related_work}, we provide the problem definitions in Section~\ref{sec:preliminaries}.
We next propose a truth inference algorithm and a task assignment algorithm in Section~\ref{TruthInference} and \ref{sec:task_assignment}, respectively. 
We provide the performance study in Section~\ref{sec:experiment} and conclude in Section~\ref{sec:conclusion}.}

\section{Preliminaries}
\label{sec:preliminaries}


In this section, we provide the definitions and the problem formulation of \emph{crowdsourced truth discovery in the presence of hierarchy}.
\subsection{Definitions}


For the ease of presentation, 
we assume that we are interested in a single attribute of objects although our algorithms can be easily generalized to find the truths of multiple attributes.
Thus, we use `the target attribute value of an object'
and `the value of an object' interchangeably.

A \emph{source} is a structured or unstructured database which contains the information on target attribute values for a set of objects.
In this paper, a \emph{source} is a certain web page or website and a \emph{worker} represents a human worker in crowdsourcing platforms. 
The information of an object provided by a source or a worker is called a \emph{claimed value}. 
\eat{We define a record and an answer as follows:}
 

\begin{definition}
\vspace{-0.02in}
A \emph{record} is a data describing the information about an object from a source. A record on an object $o$ from a source $s$ is represented as a triple $(o, s, v_o^s)$ where $v_o^s$ is the claimed value of an object $o$ collected from $s$.
Similarly, if a worker $w$ answers that the truth on an object $o$ is $v_o^w$, the \emph{answer} is represented as $(o, w, v_o^w)$. 
\end{definition}

Let $S_o$ be the set of the sources which claimed a value on the object $o$ and $V_o$ be the set of candidate values collected from $S_o$.
Each worker in $W_o$ answers a question about the object $o$ by selecting a value from $V_o$.

In our problem setting, we assume that we have a hierarchy tree $H$ of the claimed values.
If we are interested in an attribute related to locations (e.g., birthplace), $H$ would be a geographical hierarchy with different levels of granularity (e.g., continent, country, city, etc.). 
We also assume that there is no answer with the value of the root in the hierarchy since it provides no information at all (e.g., Earth as a birthplace). 
\ifnotationsummary
We summarize the notations to be used in the paper in Table~\ref{tab:notations}.
\fi

\begin{example}
Consider the records in Table~\ref{expl_records}.
Since the source Wikipedia claims that the location of the Statue of Liberty is Liberty Island, it is represented by $v_o^s=$`Liberty Island' where $o=$`Statue of Liberty' and $s=$`Wikipedia'.
If a human worker `Emma Stone' answered Big Ben is in London, it is represented by $v_o^w=$`London' where $o=$`Big Ben' and $w=$`Emma Stone'. 
\end{example}


\subsection{Problem Definition}
\eat{
We formally define the problem of crowdsourced truth discovery in the presence of hierarchies. 
\begin{definition}
Given a set of objects $O$ and the corresponding hierarchy tree $H$, we define two subproblems of the crowdsourced truth discovery. 

\begin{itemize} 
\item \emph{Hierarchical truth inference problem:}
Given a set of records $R$ collected from the sources and a set of answers $A$ from the workers,
we find the \red{most specific} true value ${v_o^*}$ of each object $o \in O$ among the candidate values in $V_o$ by using the hierarchy $H$.
\item \emph{Task assignment problem:}
For each worker $w$ in a set of workers $W$,
we select the top-$k$ objects from $O$ which are likely to increase the overall accuracy of the inferred truths the most by using the hierarchy $H$.
\end {itemize}
\end{definition}}
Given a set of objects $O$ and a hierarchy tree $H$, we define the two subproblems of the crowdsourced truth discovery. 
\begin{definition}[Hierarchical truth inference problem]
	For a set of records $R$ collected from the sources and a set of answers $A$ from the workers,
	we find the \red{most specific} true value ${v_o^*}$ of each object $o \in O$ among the candidate values in $V_o$ by using the hierarchy $H$.	
\end{definition}
\begin{definition}[Task assignment problem]
	For each worker $w$ in a set of workers $W$,
	we select the top-$k$ objects from $O$ which are likely to increase the overall accuracy of the inferred truths the most by using the hierarchy $H$.
\end{definition}
We present a hierarchical truth inference algorithm in Section~\ref{TruthInference} and a task assignment algorithm in Section~\ref{sec:task_assignment}.

\ifnotationsummary
\begin{table}[b]
	\center
	\vspace{-0.05in}
	\caption{Notations}
	\vspace{-0.03in}
	\label{tab:notations}
	\small
	\begin{tabular}{c|l}
		\toprule
		Symbol& Description\\
		\midrule
		\midrule
		$s$ & A data source \\
		$w$ & A crowd worker \\
		$v_o^s$ & Claimed value from $s$ about $o$ \\
		$v_o^w$ & Claimed value from $w$ about $o$ \\
		\midrule
		$R$ & \makecell[l]{Set of all records collected from the set of sources $S$}\\
		$A$ & \makecell[l]{Set of all answers collected from the set of workers $W$}\\
		$V_o$ & Set of candidate values about $o$\\
		\midrule
		$S_o$ & Set of sources which post information about $o$\\
		$W_o$ & Set of workers who answered about $o$\\
		$O_s$ & Set of objects that source $s$ provided a value\\
		$O_w$ & Set of objects that worker $w$ answered to\\
		$G_o(v)$ & \makecell[l]{Set of values in $V_o$ which are ancestors of a value $v$ \\ ~except the root in the hierarchy $H$}\\
		$D_o(v)$ & Set of values in $V_o$ which are descendants of $v$ \\			
		\bottomrule
	\end{tabular}
\end{table}
\fi

\section{Hierarchical Truth Inference}
\label{TruthInference}
For the hierarchical truth inference, we first model the trustworthiness of sources and workers for a given hierarchy.
Then, we propose a probabilistic model to describe the process of generating the set of records and the set of answers based on the trustworthiness modeling.
We next develop an inference algorithm to estimate the model parameters and determine the truths.

\subsection{Our Generative Model}
\label{sec:gen_model}

\begin{figure}[tb]
\centering
\includegraphics[width=2.2in]{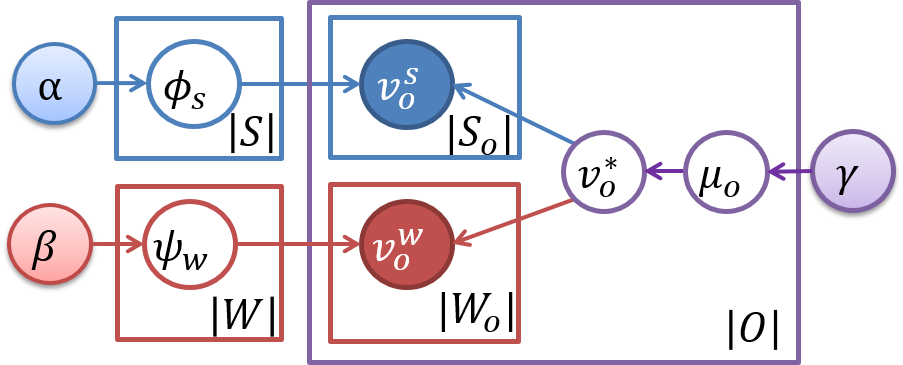}
\vspace{-0.1in}
\caption{A graphical model for truth inference}
\vspace{-0.1in}
\label{fig:graphical_model}
\end{figure}

Our probabilistic graphical model in Figure~\ref{fig:graphical_model} expresses the conditional dependence (represented by edges) between random variables (represented by nodes).
While the previous works \cite{demartini2012zencrowd,whitehill2009whose,karger2011iterative,raykar2010learning}  assume that all sources and workers have their own reliabilities only,
we assume that each source or worker has its generalization tendency as well as reliability.
We first describe how sources and workers generate the claimed values based on their trustworthiness. 
We next present the model for generating the true value.
Finally, we provide the detailed generative process of our probabilistic model. 

\minisection{Model for source trustworthiness}
For an object $o$, let $v_o^*$ be the truth and $v_o^s$ be the claimed value reported by a source $s$.
Recall that $V_o$ is the set of candidate values for an object $o$. 
Furthermore, we let $G_o(v)$ denote the set of candidate values which are ancestors of a value $v$ except for the root in the hierarchy $H$.

There are three relationships between a claimed value $v_o^s$ and the truth $v_o^*$: (1) $v_o^s = v_o^*$, (2) $v_o^s \in G_{o}(v_o^*)$ and (3) otherwise.
Let $\phi_s=(\phi_{s,1},\phi_{s,2},\phi_{s,3})$ be the \emph{trustworthiness distribution} of a source $s$ 
where $\phi_{s,i}$ is the probability that a claimed value of the source $s$ corresponds to the $i$-th relationship.
In each relationship, a claimed value is generated as follows:
\begin{itemize} 
\item {\bf Case 1 ($v_o^s = v_o^*$):} The source $s$ provides the exact true value with a probability $\phi_{s,1}$.

\item {\bf Case 2 ($v_o^s \in G_o(v_o^*)$):} 
The source $s$ provides a \emph{generalized true value} $v_o^s$ with a probability $\phi_{s, 2}$. 
In this case, the claimed value is an ancestor of the truth $v_o^*$ in $H$.
We assume that the claimed value is uniformly selected from $G_{o}(v_o^*)$.

\item {\bf Case 3 ($otherwise$):} 
The source $s$ provides a wrong value $v_o^s$ not even in $G_o(v_o^*)$.
The claimed value is uniformly selected among the rest of the candidate values in $V_o$.

\end{itemize}

The probability distribution $\phi_s$ is an initially-unknown model parameter to be estimated in our inference algorithm.
Accordingly, the probability of selecting an answer $v_o^s$ among the values in $V_o$ for an object $o$ is represented by
\ifshrinkmode
{\small
\begin{equation}
\label{eq:sourceProb}
P(v_o^s|v_o^*,\phi_s) = \begin{cases}
\phi_{s,1} & \mbox{if $v_o^s=v_o^*$,}\\
\phi_{s,2}/|G_o(v_o^*)| & \mbox{if $v_o^s \in G_o(v_o^*)$,} \\
\phi_{s,3}/(|V_o|-|G_o(v_o^*)|-1) & \mbox{otherwise.}
\end{cases}
\vspace{-0.02in}
\end{equation}
}
\else
\begin{equation}
\label{eq:sourceProb}
P(v_o^s|v_o^*,\phi_s) = \begin{cases}
        \phi_{s,1} & \mbox{if $v_o^s=v_o^*$,}\\
        \frac{\phi_{s,2}}{|G_o(v_o^*)|} & \mbox{if $v_o^s \in G_o(v_o^*)$,} \\
        \frac{\phi_{s,3}}{|V_o|-|G_o(v_o^*)|-1} & \mbox{otherwise.}
        \end{cases}
\end{equation}
\fi

\noindent For the prior of the distribution $\phi_{s}$, 
we assume that it follows a Dirichlet distribution $Dir(\alpha)$, with a hyperparameter $\alpha=(\alpha_1,\alpha_2,$ $\alpha_3)$, 
which is the conjugate prior of categorical distributions.

Let $O_H$ be the set of objects who have an ancestor-descendant relationship in their candidate set.
In practice, there may exist some objects whose candidate values do not have an ancestor-descendant relationship.
In this case, the probability of the second case (i.e., $\phi_{s,2}$) may be underestimated.
Thus, if there is no ancestor-descendant relationship between the claimed values about $o$ (i.e., $o\notin O_H$), 
we assume that a source generates its claimed value $v_o^s$ with the following probability
\ifshrinkmode
{\small
\begin{equation}
\label{eq:sourceProb_noH}
P(v_o^s|v_o^*,\phi_s)= \begin{cases}
\phi_{s,1}+\phi_{s,2} & \mbox{if $v_o^s=v_o^*$,}\\
\phi_{s,3}/(|V_o|-1) & \mbox{otherwise.}
\end{cases}
\vspace{-0.03in}
\end{equation}
}
\else
\begin{equation}
\label{eq:sourceProb_noH}
P(v_o^s|v_o^*,\phi_s)= \begin{cases}
        \phi_{s,1}+\phi_{s,2} & \mbox{if $v_o^s=v_o^*$,}\\
        \frac{\phi_{s,3}}{|V_o|-1} & \mbox{otherwise.}
        \end{cases}
\end{equation}
\fi

\begin{figure*}
\hspace{-0.55in}
\subfloat{\parbox{0.01\textwidth}{
\begin{eqnarray}
f_{o,s}^v \eqmid  \frac{P(v_o^s|v_o^*\eqmid v,\phi_s)\cdotmid \mu_{o,v}}{\sum_{v' \inmid V_o}{P(v_o^s|v_o^*\eqmid v',\phi_s) \cdotmid \mu_{o,v'}}} \nonumber\\
	f_{o,w}^v \eqmid \frac{P(v_o^w|v_o^*\eqmid v,\psi_w)\cdotmid \mu_{o,v}}{\sum_{v'\inmid V_o}{P(v_o^w|v_o^*\eqmid v',\psi_w) \cdotmid \mu_{o,v'}}}\nonumber
\end{eqnarray}}
}
\hspace{-1.31in}
\subfloat{\parbox{0.01\textwidth}{
\begin{align}
	g_{o,s}^1
	\eqmid & \frac{\phi_{s,1}\cdotmid \mu_{o,v_o^s}}{\sum_{v \inmid V_o}{P(v_o^s|v_o^*\eqmid v,\phi_s)\cdotmid \mu_{o,v}}}\nonumber
	\\
	g_{o,s}^2\eqmid &
	\begin{cases}
	 \frac{\sum_{v \in D_o(v_o^s)}{\frac{\phi_{s,2}}{|G_o(v)|}\cdot\mu_{o,v}}}{\sum_{v \in V_o}{P(v_o^s|v_o^*\eqmid v,\phi_s)\cdot\mu_{o,v}}} & \mbox{\scriptsize if  $o \in O_H$}\\[5pt] 
	 \frac{\phi_{s,2}\cdot\mu_{o,v_o^s}}{\sum_{v \inmid V_o}{P(v_o^s|v_o^*\eqmid v,\phi_s) \cdot \mu_{o,v}}} & \mbox{\scriptsize otherwise}
	 \end{cases}	
	\nonumber
	\\
	g_{o,s}^3
	\eqmid & \frac{\sum_{v \in \neg D_o(v_o^s)}{\frac{\phi_{s,3}}{|V_o - G_o(v)|-1}\cdot\mu_{o,v}}}{\sum_{v \in V_o}{P(v_o^s|v_o^*\eqmid v,\phi_s)\cdot\mu_{o,v}}}\nonumber
\end{align}}
}
\hspace{-0.31in}
\subfloat{\parbox{0.01\textwidth}{
\begin{align}
	g_{o,w}^1
	\eqmid & 
	\frac{\psi_{w,1}\cdot\mu_{o,v_o^w}}{\sum_{v \in V_o}{P(v_o^w|v_o^*\eqmid v,\psi_w)\cdot\mu_{o,v}}} 		\nonumber 
	\\
	g_{o,w}^2
	\eqmid &
	\begin{cases}	
	\frac{\sum_{v \inmid D_o(v_o^w)}{\psi_{w,2}\cdot Pop_{2}(v_o^w|v_o^* \eqmid v) \cdot\mu_{o,v}}}{\sum_{v \in V_o}{P(v_o^w|v_o^*\eqmid v,\psi_w)\cdot\mu_{o,v}}}& \mbox{\scriptsize if  $o \in O_H$}\\[5pt] 
	\frac{\psi_{w,2}\cdot\mu_{o,v_o^w}}{\sum_{v \in V_o}{P(v_o^w|v_o^*\eqmid v,\psi_w)\cdot\mu_{o,v}}} & \mbox{\scriptsize otherwise}
	\end{cases}	
	\nonumber
	\\
	g_{o,w}^3
	\eqmid & \frac{\sum_{v \in \neg D_o(v_o^w)}{\psi_{w,3}\cdot Pop_{3}(v_o^w|v_o^* \eqmid v) \cdot\mu_{o,v}}}{\sum_{v \in V_o}{P(v_o^w|v_o^*\eqmid v,\psi_w)\cdot\mu_{o,v}}}\nonumber
\end{align}}
}
\hspace{-0.05in}
\vspace{-0.08in}
\caption{E-step for the proposed truth inference algorithm}
\label{fig:Estep}
\vspace{-0.06in}
\end{figure*}

\minisection{Model for worker trustworthiness}
Let $v_o^w$ be the claimed value chosen by a worker $w$ among the candidates in $V_o$ for an object $o$.
Similar to the model for source trustworthiness, we also assume the three relationships between a claimed value $v_o^w$ and the truth $v_o^*$: (1) $v_o^w = v_o^*$, (2) $v_o^w \in G_{o}(v_o^*)$ and (3) otherwise. 
Each worker $w$ has its \emph{trustworthiness distribution} $\psi_w\eqmid(\psi_{w,1},\psi_{w,2},\psi_{w,3})$ where $\psi_{w,i}$ is the probability that an answer of the worker $w$ corresponds to the $i$-th relationship.
We assume that the trustworthiness distribution is generated from $Dir(\beta)$ with a hyperparameter $\beta=(\beta_1,\beta_2,\beta_3)$.
\eat{Note that a worker may answer based on the claimed values from the sources.
For example, s}

Since it is difficult for the workers to be aware of the correct answer for every object, a worker can refer to web sites to answer the question.
In such a case, if there is a widespread misinformation across multiple sources, the worker is also likely to respond with the incorrect information.
Similar to \cite{dong2012less,LCA}, we thus exploit the \emph{popularity} of a value in Cases 2 and 3 to consider such dependency between sources and workers.

\begin{itemize} 
\item {\bf Case 1 ($v_o^w = v_o^*$):} The worker $w$ provides the exact true value with a probability $\psi_{w,1}$.

\item {\bf Case 2 ($v_o^w \in G_o(v_o^*)$):} 
The worker $w$ provides a generalized true value with a probability $\psi_{w, 2}$. 
We assume that the claimed value $v_o^w$ is selected according to the popularity $Pop_{2}(v_o^w|v_o^*)= \frac{|\{s | s\in S_o, v_o^s = v \}|}{|\{s | s\in S_o, v_o^s \in G_o(v_o^*) \}|}$ which is the proportion of the records whose claimed value is $v_o^w$ out of the records with generalized values of $v_o^*$.
\item {\bf Case 3 ($otherwise$):} 
The claimed value is selected from the wrong values according to the popularity $Pop_{3}(v_o^w|v_o^*)= \frac{|\{s | s\in S_o, v_o^s = v \}|}{|\{s | s\in S_o, v_o^s \notin G_o(v_o^*), v_o^s \neq v_o^* \}|}$. 
\end{itemize}

%



By the above model, the probability of selecting an answer $v_o^w$ for the truth $v_o^*$ of an object $o$ is formulated as
{\small
\begin{equation}
\label{eq:workerProb}
P(v_o^w|v_o^*,\psi_w) \eqmid \begin{cases}
        \psi_{w,1} & \mbox{if $v_o^w=v_o^*$,}\\
        \psi_{w,2}\cdot Pop_{2}(v_o^w|v_o^*) & \mbox{if $v_o^w \in G_o(v_o^*)$,} \\
        \psi_{w,3}\cdot Pop_{3}(v_o^w|v_o^*) & \mbox{otherwise.}
        \end{cases}
\end{equation}
}

\noindent Similar to the model for source trustworthiness, if there is no ancestor-descendant relationship in the candidate values of an object $o$, the probability of selecting a claimed value $v_o^w$ is
{\small
\begin{equation}
\label{eq:workerProb_noH}
P(v_o^w|v_o^*,\psi_w)= \begin{cases}
        \psi_{w,1}+\psi_{w,2} & \mbox{if $v_o^w=v_o^*$,}\\
        \psi_{w,3}\cdot Pop_{3}(v_o^w|v_o^*) & \mbox{otherwise.}
        \end{cases}
\end{equation}}

\minisection{Model for truth}
\eat{In our unsupervised problem setting, since the correct value is not known, w}
We introduce the probability distribution over the candidate answers to determine the truth, called \emph{confidence distribution}.
Each object $o$ has a confidence distribution $\mu_o=\{\mu_{o,v}\}_{v\in V_o}$
where $\mu_{o,v}$ is the probability that the value $v \in V_o$  is the true answer for $o$.
We also use a dirichlet prior $Dir(\gamma_o)$ for the confidence distribution $\mu_o$ where $\gamma_o=\{\gamma_{o,v}\}_{v\in V_o}$ is a hyperparameter.

Based on the above three models, the generative process of our model works as follows.


\minisection{Generative process}
Given a set of objects $O$, a set of sources $S$ and a set of workers $W$,
our proposed model assumes the following generative process for the set of records $R$ and the set of answers $A$:

\begin{enumerate} [topsep=1pt,itemsep=0ex,partopsep=0.4ex,parsep=0.4ex]
\item Draw $\phi_s \sim Dir(\alpha)$ for each source $s\in S$
\item Draw $\psi_w \sim Dir(\beta)$ for each worker $w\in W$
\item For each object $o\in O$
	\begin{enumerate} [topsep=1pt,itemsep=0ex,partopsep=0.4ex,parsep=0.4ex]
	\item Draw $\mu_o \sim Dir(\gamma_o)$
	\item Draw a true value $v_o^* \sim Categorical(\mu_o)$
	\item For each source $s\in S_o$
		\begin{enumerate} [topsep=1pt,itemsep=0ex,partopsep=0.4ex,parsep=0.4ex]
		\item Draw a value $v_o^s$ following $P(v_o^s|v_o^*,\phi_s)$\label{l123}
		\end {enumerate}
	\item For each worker $w \in W_o$
		\begin{enumerate} [topsep=1pt,itemsep=0ex,partopsep=0.4ex,parsep=0.4ex]
		\item Draw a value $v_o^w$ following $P(v_o^w|v_o^*,\psi_w)$
		\end {enumerate}	
	\end{enumerate}
\end{enumerate}

\subsection{Estimation of Model Parameters}
\label{sec:emalgorithm}
We now develop an inference algorithm for the generative model.
Let $\Theta=\pmb{\phi} \cup \pmb{\psi} \cup \pmb{\mu}$ be the set of all model parameters where $\pmb{\phi}\eqmid\{\phi_s|s\inmid S\}$, $\pmb{\psi}\eqmid\{\psi_w|w\inmid W\}$ and $\pmb{\mu}\eqmid\{\mu_o|o\inmid O\}$.
We propose an EM algorithm to find the maximum a posteriori (MAP) estimate of the parameters in our model.

\minisection{The maximum a posteriori (MAP) estimator}
Recall that $R = \{(o,s,v_o^s)\}$ is the set of records from the sources and $A\eqmid \{(o,w,v_o^w)\}$ is the set of answers from the workers. 
For every object $o$, each source $s \in S_o$ and each worker $w \in W_o$ generates its claimed values independently.  
Then, the likelihood of $R$ and $A$ based on our generative model is 
{\small
\begin{equation*}
\vspace{-0.02in}
P(R,A|\Theta) \eqmid \prod_{o\in O}{\prod_{s\in S_o} P(v_o^s | \phi_s, \mu_o)} \cdotmid \prod_{o \in O}{\prod_{w \in W_o} P(v_o^w | \psi_w,\mu_o)}
\vspace{-0.02in}
\end{equation*}}

\noindent where the probability of generating a claimed value by a source or a worker becomes
{\small
\begin{align}
	P(v_o^s | \phi_s,\mu_o) &= \sum_{v \in V_o}{P(v_o^s | \phi_s,v_o^*=v) \cdot \mu_{o,v}}\label{eq:margin_s}\\
	P(v_o^w | \psi_w,\mu_o) &= \sum_{v \in V_o}{P(v_o^w | \psi_w,v_o^*=v) \cdot \mu_{o,v}}.\label{eq:margin_w}
\end{align}}


\noindent Consequently, the MAP point estimator is obtained by maximizing the log-posterior as
\begin{equation}
\label{eq:map}
\hat{\Theta} = \argmax_{\Theta}{\{\log{P(R,A|\Theta)}+\log{P(\Theta)}\}} = \argmax_{\Theta}{\mathbb{F}}  
\end{equation}
where the objective function $\mathbb{F}$ is 
{\small
\begin{align}
\vspace{-0.02in}
\mathbb{F}
& =\sum_{o\in O}{\sum_{s\in S_o} \log{ \sum_{v \in V_o}{P(v_o^s | \phi_s,v_o^*=v) \cdot \mu_{o,v}}}} \nonumber \\
& + \sum_{o\in O}{\sum_{w\in W_o}\log{\sum_{v \in V_o}{P(v_o^w | \psi_w,v_o^*=v) \cdot \mu_{o,v}}}} \label{eq:objectiveL}\\
& + \sum_{s\in S}{\log{p(\phi_s|\alpha)}} \plusmid 
\sum_{w\in W}{\log{p(\psi_w|\beta)}} \plusmid \sum_{o\in O}{\log{p(\mu_o|\gamma_o)}}.\nonumber
\vspace{-0.02in}
\end{align}}

Note that although we assumed that each claimed value is generated independently according to its probability distribution defined in Eq.~(\ref{eq:margin_s}) and (\ref{eq:margin_w}), the dependencies between sources and workers are already considered in $Pop_{2}(v_o^w|v_o^*)$ and $Pop_{3}(v_o^w|v_o^*)$.

\minisection{The EM algorithm}
We introduce a random variable $C_{v}$ to represent the type of the relationship between the claimed value $v$ and the truth $v_o^*$.
It is defined as follows:
\begin{equation*}
\vspace{-0.03in}
C_{v} = \begin{cases}
        1 & \mbox{if $v=v_o^*$,}\\
        2 & \mbox{if $v \in G_o(v_o^*)$,} \\
        3 & \mbox{otherwise.}
        \end{cases}
\vspace{-0.02in}
\end{equation*}


In the \emph{\bf E-step}, we compute the conditional distributions of the hidden variables $C_{v_o^s}$, $C_{v_o^w}$ and $v_o^*$ under our current estimate of the parameters $\Theta$.
Let $f_{o,s}^v$, $f_{o,w}^v$, $g_{o,s}^t$ and $g_{o,w}^t$ denote the conditional probabilities
$P(v_o^*\eqmid v| v_o^s, \mu_o, \phi_s)$, $P(v_o^*\eqmid v | v_o^w, \mu_o, \psi_w)$, $P(C_{v_o^s}\eqmid t | \mu_o, \phi_s)$ and $P(C_{v_o^w}\eqmid t|$  $\mu_o, \psi_w)$, respectively.
Using Bayes' rule, we can update the conditional probabilities as shown in Figure~\ref{fig:Estep} where $D_o(v) = \{v'|v \in G_o(v') \wedge v'\in V_o \}$ is the set of descendants of $v$ among the candidate values and $\neg D_o(v) = V_o \minusmid D_o(v) \minusmid \{v\}$ is the set of candidate values each of which is neither a descendant of the value $v$ nor the $v$ itself.

In the \emph{\bf M-step}, we find the model parameters $\Theta$ that maximize our objective function $\mathbb{F}$.
We first add Lagrange multipliers to enforce the constraints of model parameters.

{\scriptsize
\vspace{-0.05in}
\begin{equation*}
\mathbb{L} = \mathbb{F}
\plusmid \sum_{s \in S}{\lambda_{\phi,s}  \left( 1\minusmid\sum_{t=1}^{3}{\phi_{s,t}} \right)}
\plusmid \sum_{w \in W}{\lambda_{\psi,w}  \left( 1 \minusmid \sum_{t=1}^{3}{\psi_{w,t}} \right)} 
\plusmid \sum_{o \in O}{\lambda_{\mu,o}  \left( 1 \minusmid \sum_{v \in V_o}{\mu_{o,v}} \right)}
\end{equation*}
\vspace{-0.04in}
}

We obtain the following equations for updating the model parameters $\Theta$
by taking the partial derivative of the Lagrangian $\mathbb{L}$ with respect to each model parameter and setting it to zero:

{\small
\vspace{-0.03in}
\begin{equation}
\label{eq:MstepMu}
\mu_{o,v} = \frac{\sum_{s\in S_o}{f_{o,s}^v}+\sum_{w\in W_o}{f_{o,w}^v}+\gamma_{o,v} -1}
{|S_o|+|W_o|+\sum_{v'\in V_o}{\left( \gamma_{o,v'}-1 \right)}}
\end{equation}
\begin{equation}
\label{eq:MstepPhi}
\phi_{s,t} = \frac{\sum_{o\in O_s}{g_{o,s}^t}+\alpha_t-1}{|O_s|+\sum_{t'=1}^{3}{(\alpha_{t'}-1)}}
\end{equation}
\begin{equation}
\label{eq:MstepPsi}
\psi_{w,t} = \frac{\sum_{o\in O_w}{g_{o,w}^t}+\beta_t-1}{|O_w|+\sum_{t'=1}^{3}{(\beta_{t'}-1)}}
\end{equation}}
\hspace{-0.03in}where $O_s$ and $O_w$ are the sets of objects claimed by $s$ and $w$, respectively.
We infer the truth by choosing the value with the maximum confidence among the candidate values as 
\begin{equation}
\label{eq:estimated_truth}
v_o^* = \argmax_{v \in V_o} \mu_{o,v}.
\end{equation}

\minisection{Extension to numerical data}
In the world wide web, numerical data also have an implicit hierarchy due to the significant digits which carry meaning contributing to its measurement resolution.
For example, even though the area of Seoul is $605.196 km^2$,  
different websites may represent the area in various forms depending on the significant figures (e.g., $605.2 km^2$, $605 km^2$).
\eat{As we discussed earlier, the existing methods for categorical data regards the values are completely different.
On the other hand, a}An existing algorithm \cite{li2014confidence} to handle numerical data utilizes a weighted sum of the claimed values to consider the distribution of the claimed values.
However, such method is sensitive to outliers and thus need a proper preprocessing to remove the outliers.
To overcome the drawbacks, we generate the underlying hierarchy in the numerical data by assuming that $v_d$ is a descendant of $v_a$ if a value $v_a$ can be obtained by rounding off a value $v_d$.
Then, we can use our TDH algorithm to find the truths in numerical data by taking  into account the relationship between the values in the implicit hierarchy.
Our algorithm is also robust to the outliers with extremely small or large value since we estimate the truth by selecting the most probable value from the candidate values rather than computing a weighted average of the claimed values.

\section{Task Assignment to Workers}
\label{sec:task_assignment}
In this section, 
we propose a task assignment method to select 
the best objects to be assigned to the workers in crowdsourcing systems.
We first define a quality measure of tasks called \emph{Expected Accuracy Increase (EAI)}
 and develop an incremental EM algorithm to quickly estimate the quality measure.
Finally, we present an efficient algorithm for assigning the $k$ questions to each worker $w$ in a set of workers $W$ based on the measure.

\subsection{The Quality Measure}
\label{sec:quality_measure}
Given a worker $w$, our goal is to choose an object to be assigned to the worker $w$ which is likely to increase the accuracy of the estimated truths the most.
Thus, we define a quality measure for a pair of worker and an object based on the improvement of the accuracy.
As discussed in \cite{zheng2015qasca}, the improvement of the accuracy by a task can be estimated by using the difference between the highest confidence as follows:
{\small
\begin{equation}
\label{eq:accimprovement}
\ifshrinkmode
(Accuracy~improvement) = \{\max_{v}{\mu_{o,v|w}}-\max_{v}{\mu_{o,v}}\}/|O|
\else
 (Accuracy~improvement) = \frac{\max_{v}{\mu_{o,v|w}}-\max_{v}{\mu_{o,v}}}{|O|}
\fi
\end{equation}}
\hspace{-0.07in} where $\mu_{o,v|w}$ is the estimated confidence on $v$ if the worker $w$ answers about an object $o$.

\minisection{The quality measure used by QASCA}
The QASCA\cite{zheng2015qasca} algorithm calculates the estimated confidence by using the current confidence distribution and the likelihood of the answer $v_o^w$ given the truth $v_o^*=v$ as
{\small
\begin{equation*}
\mu_{o,v|w} \propto \mu_{o,v} \cdot p(v_o^w = v'|v_o^*=v)
\end{equation*}}

\noindent where $v'$ is a sampled claimed value.
There are two drawbacks in the quality measure of QASCA. 
First, since it computes the estimated confidence $\mu_{o,v|w}$ based on a sampled answer $v_o^w=v$,
the value of the quality measure is very sensitive to the sampled answer.
In addition, QASCA does not consider the number of claimed values collected so far and the estimated confidence $\mu_{o,v|w}$ may not be accurate.
For instance, assume that there exist two objects which have identical confidence distributions.
If one of the objects already has many collected claimed values, an additional answer is not likely to change the confidence significantly.
Thus, task assignment algorithms should select another object who has a smaller number of collected records and answers.

\minisection{Our quality measure}
To avoid the sensitiveness caused by sampling answers, we develop a new quality measure \emph{Expected Accuracy Improvement (EAI)} which is obtained by taking the expectation to Eq.~(\ref{eq:accimprovement}). That is,
{\small
\begin{equation}
\label{eq:def_emci}
\ifshrinkmode
EAI(w,o) = \{E[\max_{v} {\mu_{o,v|w}}] - \max_{v} {\mu_{o,v}}\}/|O|.
\else
EAI(w,o) = \frac{E[\max_{v} {\mu_{o,v|w}}] - \max_{v} {\mu_{o,v}}}{|O|}.
\fi
\end{equation}}
\noindent By the definition of expectation, $E[\max_{v} {\mu_{o,v|w}}]$ becomes
{\small
\begin{equation}
\label{eq:expected_max_mu}
\begin{split}
&E[\max_{v} {\mu_{o,v|w}}] \eqmid \sum_{v'\in V_o}{P(v_o^w \eqmid v'| \psi_w, \mu_o) \cdotmid \max_{v} {\mu_{o,v|v_o^w \eqmid v'}}}.
\end{split}
\end{equation}}
\hspace{-0.06in} where $\mu_{o,v|v_o^w = v'}$ is the conditional confidence when a worker $w$ answers with $v'$ about the object $o$.

Since $P(v_o^w=v'| \psi_w, \mu_o)$ can be computed by Eq.~(\ref{eq:margin_w}),
to compute $E[\max_{v} {\mu_{o,v|w}}]$ by Eq.~(\ref{eq:expected_max_mu}), we need the estimation of the conditional confidence $\mu_{o,v|v_o^w=v'}$ with an additional answer $v_o^w=v'$.
Recall that the estimated confidence computed by QASCA may not be accurate because it does not consider the collected records and answers so far.
To reduce the error, we use them to compute the conditional confidence  $\mu_{o,v|v_o^w=v'}$.
We can compute the conditional confidence $\mu_{o,v|v_o^w=v'}$ by applying the EM algorithm in Section~\ref{sec:emalgorithm} with the collected records and answers including $v_o^w=v'$.
However, since it is computationally expensive, we next develop an \emph{incremental EM algorithm}. 

\subsection{The Incremental EM Algorithm}

Let $\mathbb{F}_{v_o^w=v'}$ be the objective function in Eq.~(\ref{eq:map}) after obtaining an additional answer $(o,w,v')$.
Then, we have
\begin{equation}
\mathbb{F}_{v_o^w=v'}=\mathbb{F}+\log{\sum_{v \in V_o}{P(v_o^w=v' | \psi_w,v_o^*=v) \cdot \mu_{o,v}}}\nonumber
\end{equation}
by adding the related term of the additional answer (log likelihood of the additional answer) to Eq.~(\ref{eq:objectiveL}).
Instead of running the iterative EM algorithm in Section~\ref{sec:emalgorithm}, we incrementally perform a \emph{single EM-step} to speed up for only the additional answer with the current model parameters and the above objective function.

\minisection{E-step}
Since we use the current model parameters, the probabilities of the hidden variables for collected records and answers are not changed.
Thus, we only need to compute the conditional probabilities of the hidden variable given the additional answer as
\begin{equation}
\label{eq:inc_estep}
f_{o,w|v_o^w=v'}^{v} \eqshort \frac{P(v_o^w\eqmid v'|v_o^*\!\eqshort v,\psi_w)\cdot\mu_{o,v}}{\sum_{v''\in V_o}{P(v_o^w \eqmid v'|v_o^* \eqshort v'',\psi_w) \cdot \mu_{o,v''}}}
\end{equation}
based on the equation for $f_{o,w}^v$ used at the E-step in Figure~\ref{fig:Estep}.

\minisection{M-step}
For the objective function $\mathbb{F}_{v_o^w=v'}$, we obtain the following equation of the M-step for the confidence distribution $\mu_o$ with the additional answer $v_o^w=v'$
\begin{equation*}
\label{eq:MstepMu_cond}
\mu_{o,v|v_o^w=v'} \eqshort \frac{\sum_{s\in S_o}{f_{o,s}^v}\!+\!\sum_{w'\in W_o }{f_{o,w'}^v}\!+\!f_{o,w|v_o^w \eqmid v'}^{v}\!+\!\gamma_{o,v} \minusmid 1}
{|S_o|\!+\!|W_o|\!+\!1\!+\!\sum_{v''\in V_o}{\left( \gamma_{o,v''}-1 \right)}}
\end{equation*}
by adding the related terms $f_{o,w|v_o^w=v'}^v$ and $1$ to the numerator and the denominator of the update equation in Eq.~(\ref{eq:MstepMu}), respectively. 
Let $N_{o,v}$ and ${D_o}$ be the numerator and the denominator in Eq.~(\ref{eq:MstepMu}), respectively. 
Then, the above equation can be rewritten as 
\begin{equation}
\vspace{-0.02in}
\label{eq:MstepMu_cond_short}
\mu_{o,v|v_o^w = v'}= \frac{N_{o,v} + f_{o,w|v_o^w=v'}^{v}}{D_o+1}.
\end{equation}

\noindent By substituting $f_{o,w|v_o^w=v'}^v$ in Eq.~(\ref{eq:MstepMu_cond_short}) with Eq.~(\ref{eq:inc_estep}), the conditional confidence becomes
\begin{equation}
\label{eq:conditional_mu}
\mu_{o,v|v_o^w = v'}=\frac{N_{o,v} + \frac{P(v_o^w= v'|v_o^*=v,\psi_w)\cdot\mu_{o,v}}{\sum_{v''\in V_o}{P(v_o^w= v'|v_o^*=v'',\psi_w) \cdot \mu_{o,v''}}}}{D_o+1}.
\end{equation}
Since $N_{o,v}$ and $D_o$ are proportional to the number of the existing claimed values, the confidence will be changed very little if there are many claimed values already. 
Thus, we can overcome the second drawback of QASCA.
Since $N_{o,v}$s and ${D_o}$s are repeatedly used to compute $\mu_{o,v|v_o^w = v'}$, our truth inference algorithm keeps $N_{o,v}$s and ${D_o}$s in main memory to reduce the computation time.

\minisection{Time complexity analysis}
To calculate
 $E[\max_{v} {\mu_{o,v|w}}]$ by Eq.~(\ref{eq:expected_max_mu}), $P(v_o^w=v'| \psi_w, \mu_o)$ is computed $|V_o|$ times and $\mu_{o,v|v_o^w = v'}$ is calculated for every pair of $v$ and $v'$ (i.e., $O(|V_o|^2)$ times).
Moreover, computing $P(v_o^w=v'| \psi_w, \mu_o)$ and $\mu_{o,v|v_o^w = v'}$ take $O(|V_o|)$ time.
Thus, it takes $O(|V_o|^3)$ time to compute $EAI(w,o)$ by Eq.~(\ref{eq:def_emci}).
In reality, $|V_o|$ is very small compared to $|O|$,$|S|$ and $|W|$.
In addition, by utilizing the pruning technique in the next section, we can significantly reduce the computation time.
Therefore, the task assignment step can be performed within a short time compared to the truth inference.
The execution time for each step will be presented in the experiment section.

\subsection{The Task Assignment Algorithm}
\label{sec:upperbound}
To find the $k$ objects to be assigned to each worker, we need to compute $EAI(w,o)$ for all pairs of $w$ and $o$.
To reduce the number of computing $EAI(w,o)$, we develop a pruning technique by utilizing an upper bound of $EAI(w,o)$.



\minisection{An upper bound of EAI}
We provide the following lemma which allows us to compute an upper bound $U_{EAI}(o)$.

\begin{lemma}
\label{lem:UBEAI}
\emph{(Upper Bound of Expected Accuracy Increase)}
For an object $o$ and a worker $w$, 
we have 
\begin{equation}
EAI(w,o) \leq U_{EAI}(o) = \frac{1 - \max_{v} {\mu_{o,v}}}{|O|\cdot(D_o+1)}.
\end{equation} 

\end{lemma}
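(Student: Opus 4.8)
The plan is to bound $EAI(w,o)$ directly from its definition in Eq.~(\ref{eq:def_emci}) by overestimating the expected maximum confidence $E[\max_v \mu_{o,v|w}]$. First I would write, using Eq.~(\ref{eq:expected_max_mu}),
\begin{equation}
E[\max_v \mu_{o,v|w}] = \sum_{v'\in V_o} P(v_o^w=v'|\psi_w,\mu_o)\cdot \max_v \mu_{o,v|v_o^w=v'},\nonumber
\end{equation}
and then use the incremental update formula in Eq.~(\ref{eq:MstepMu_cond_short}), namely $\mu_{o,v|v_o^w=v'} = (N_{o,v}+f_{o,w|v_o^w=v'}^v)/(D_o+1)$, to express $\max_v \mu_{o,v|v_o^w=v'}$. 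The key inequality is $\max_v (N_{o,v}+f_{o,w|v_o^w=v'}^v) \le \max_v N_{o,v} + \max_v f_{o,w|v_o^w=v'}^v \le \max_v N_{o,v} + 1$, since each $f_{o,w|v_o^w=v'}^v$ is a conditional probability and the $f$-values over $v$ sum to $1$ (so each is at most $1$). Observing that $N_{o,v}/D_o$ is exactly the current confidence $\mu_{o,v}$ from Eq.~(\ref{eq:MstepMu}), we get $\max_v \mu_{o,v|v_o^w=v'} \le (D_o\max_v\mu_{o,v} + 1)/(D_o+1)$ uniformly in $v'$.

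Next I would plug this uniform bound back into the sum and use $\sum_{v'} P(v_o^w=v'|\psi_w,\mu_o) = 1$ to obtain
\begin{equation}
E[\max_v \mu_{o,v|w}] \le \frac{D_o \max_v \mu_{o,v} + 1}{D_o+1}.\nonumber
\end{equation}
Substituting into $EAI(w,o) = (E[\max_v\mu_{o,v|w}] - \max_v\mu_{o,v})/|O|$ and simplifying the difference gives
\begin{equation}
E[\max_v\mu_{o,v|w}] - \max_v\mu_{o,v} \le \frac{D_o\max_v\mu_{o,v}+1}{D_o+1} - \max_v\mu_{o,v} = \frac{1-\max_v\mu_{o,v}}{D_o+1},\nonumber
\end{equation}
which divided by $|O|$ yields exactly $U_{EAI}(o)$. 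Note the bound is independent of $w$, which is what makes it useful for pruning over workers.

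The only slightly delicate point — and the step I would be most careful about — is the tie-breaking in $\max_v (N_{o,v}+f^v)$: it need not be attained at the same $v$ that maximizes $N_{o,v}$, so one cannot just add the two maxima at a common index; the correct move is the general subadditivity of $\max$ over a sum of two functions, $\max_v(a_v+b_v)\le \max_v a_v+\max_v b_v$, together with $\max_v f_{o,w|v_o^w=v'}^v \le \sum_v f_{o,w|v_o^w=v'}^v = 1$. Everything else is algebraic rearrangement using the identification $\max_v N_{o,v} = D_o\max_v\mu_{o,v}$, which holds because $D_o>0$ is a common denominator.
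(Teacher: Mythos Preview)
Your proposal is correct and follows essentially the same route as the paper's proof: bound $E[\max_v \mu_{o,v|w}]$ via Eq.~(\ref{eq:MstepMu_cond_short}) using $f_{o,w|v_o^w=v'}^v \le 1$, obtain $(\max_v N_{o,v}+1)/(D_o+1)$, and then simplify with $\mu_{o,v}=N_{o,v}/D_o$. The only cosmetic difference is that the paper bounds $f^v\le 1$ pointwise (so $N_{o,v}+f^v \le N_{o,v}+1$ for every $v$, hence $\max_v(N_{o,v}+f^v)\le \max_v N_{o,v}+1$) rather than invoking subadditivity of the max; your ``delicate point'' is therefore not actually delicate, but your argument is of course also valid.
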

\begin{proof}
From Eq.~(\ref{eq:conditional_mu}), since $\sum_{v'}{P(v_o^w \eqmid v'| \psi_w, \mu_o)} \eqmid 1$, we get
{\small
\begin{align}
E[\max_{v} {\mu_{o,v|w}}]&=\sum_{v'\in V_o}{P(v_o^w{\eqmid}v'| \psi_w, \mu_o) \cdot \max_{v} {\mu_{o,v|v_o^w = v'}}}  \nonumber\\
&\leq\max_{v,v'} {\mu_{o,v|v_o^w = v'}\cdot\sum_{v'\in V_o}{P(v_o^w{\eqmid}v'| \psi_w, \mu_o) }}   \nonumber\\
&= \max_{v,v'}{\mu_{o,v|v_o^w=v'}}. \mspace{10mu} \mbox{}\label{eq:exp_max_mu_top}
\end{align}
}
Moreover, from Eq.~(\ref{eq:MstepMu_cond_short}), we obtain 
{\small
\begin{align}
\label{eq:cond_mu_ineq}
\mu_{o,v|v_o^w=v'} &= \frac{N_{o,v}+f_{o,w|v_o^w=v'}^{v}}{D_o+1} \leq \frac{N_{o,v}+1}{D_o+1} .
\end{align}}
By substituting Eq.~(\ref{eq:cond_mu_ineq}) for $\mu_{o,v|v_o^w=v'}$ in Eq.~(\ref{eq:exp_max_mu_top}), we derive 
{\small
\begin{equation}
\label{eq:ub_proof_mu}
\begin{split}
    E[\max_{v} {\mu_{o,v|w}}] &\leq \max_{v,v'}{\mu_{o,v|v_o^w \eqmid v'}}
	\leq \frac{\max_{v}N_{o,v}+1}{D_o+1}. 
\end{split}
\end{equation}}
In addition, by applying Eq.~(\ref{eq:ub_proof_mu}) to Eq.~(\ref{eq:def_emci}), we get
{\small
\begin{equation}
\begin{array}{lr}
EAI(w,o) \leq (\frac{\max_{v}{N_{o,v}}+1}{D_o+1} - \max_{v} {\mu_{o,v}})/|O|.
\end{array}
\nonumber
\end{equation}}
Since $\mu_{o,v} = \frac{N_{o,v}}{D_o}$, we finally obtain the upper bound of $EAI(w,o)$.
{\small
\begin{equation}
\begin{array}{ll}
EAI(w,o) &
\leq (\frac{\max_{v}{N_{o,v}}+1}{D_o+1} - \frac{\max_{v}{N_{o,v}}}{D_o})/|O| \\
&= \frac{1 -  \frac{\max_{v}{N_{o,v}}}{D_o}}{|O|\cdot(D_o+1)}
= \frac{1 - \max_{v}{\mu_{o,v}}}{|O|\cdot(D_o+1)}
=U_{EAI}(o).
\end{array}
\nonumber
\end{equation}}
\end{proof}


\iftacode
\begin{algorithm}[tb]
\algsetup{linenosize=\footnotesize}
  \footnotesize
\caption{Task Assignment}
\label{TaskAssignment}
\begin{algorithmic}[1]
 \renewcommand{\algorithmicrequire}{\textbf{Input:}}
 \REQUIRE set of workers $W$, number of questions $k$
	\STATE Compute the upper bound $U_{EMCI}(o)$ for $o \in O$
	\STATE $h_{UB}$ $\leftarrow$ BuildMaxHeap(\{$ \langle U_{EAI}(o),o \rangle $|$o \in O$\})
	\STATE Sort workers in the decreasing order of $\psi_{w,1}$ \\(i.e., $\psi_{1,1} \!\geq\! \psi_{2,1} \!\geq\! \cdots \!\geq\! \psi_{|W|,1}$).
	\FOR {$w=1$ to $|W|$}
		\STATE $h_{EAI}[w]$ $\leftarrow$ BuildMinHeap(\{\})
	\ENDFOR
	\WHILE {True}
		\STATE $\langle U_{EAI}(o),o \rangle \leftarrow h_{UB}$.extractMax()
		\IF {$h_{EAI}[|W|].size \eqmid k$ $\AND$ $h_{EAI}[w].min \!>\!  U_{EAI}(o)$ for all $w$}
			\STATE {\bf break}
		\ENDIF
		\FOR {$w=1$ to $|W|$}
			\IF{$w$ already answered on $o$ \textbf{or} $h_{EAI}[w].min \!>\!  U_{EAI}(o)$}
				\STATE {\bf continue} 
			\ENDIF
			\STATE Compute $EAI(w,o)$
			\STATE $h_{EAI}[w]$.insert($ \langle EAI(w,o),o \rangle $)
			\IF{$h_{EAI}[w].size \leq k$} 
				\STATE {\bf break}
			\ENDIF
			\STATE $o \leftarrow h_{EAI}[w]$.extractMin().value()
		\ENDFOR	
	\ENDWHILE
\end{algorithmic}
\end{algorithm}
\fi

We devise an algorithm to assign the best $k$ objects to each available worker in crowdsourcing systems.
Since a single answer is sufficient to find the correct value for some objects,
we assign an object to only a single worker in each round.
If the answer is not sufficient to find the correct value of the object, we assign the object to another worker in the next round.

Our task assignment algorithm sequentially assigns each object to a worker by scanning the objects $o$ with non-increasing order of the upper bound $U_{EAI}(o)$.
To allocate an object to a worker, since $\psi_{w,1}$ is the probability of answering the truth, we consider the workers $w$ with non-increasing order of $\psi_{w,1}$.
After assigning an object to a worker $w$, if the number of assigned objects to the worker $w$ exceeds $k$, we remove the object $o$ with the minimum $EAI(w,o)$ and assign the deleted object to the next worker and perform the same step.
While scanning the objects, we stop the assignment if the upperbound $U_{EAI}(o)$ is smaller than the minimum $EAI(w,o')$ among the $EAI(w,o')$s of all assigned objects and each worker has $k$ assigned objects.
The reason is that the $EAI(w,o)$ of the remaining objects $o$ can be larger than that of any assigned object.
\iftacode

\minisection{The pseudocode} 
It is shown in Algorithm~\ref{TaskAssignment}.
We first compute the upper bound $U_{EAI}(o)$ for every object $o \in O$ by Lemma \ref{lem:UBEAI} and build a maxheap $h_{UB}$ of all objects by using $U_{EAI}(o)$ as the key to assign the objects to workers in the decreasing order of $U_{EAI}(o)$ (in lines 1-2).
The workers are sorted in the decreasing order of $\psi_{w,1}$ to give a higher priority to reliable workers (in line 3). 
We next initialize a minheap $h_{EAI}[w]$ for each worker $w$ to contain the $k$ assigned objects (in lines 4-6).
Then, we repeatedly extract an object from $h_{UB}$ and assign the object to a worker in the sorted order of $\psi_{w,1}$ (in lines 12-18). 
Before assigning an object $o$, if the heaps $h_{EAI}[w]$s of all workers are full and the minimum value of $EAI(w,o')$ of the objects $o'$ in all $h_{EAI}[w]$s is larger than the upper bound $U_{EAI}(o)$, we stop immediately. 
\else
Due to the lack of space, we omit the pseudocode of the algorithm.
\fi

\section{Experiments}
\label{sec:experiment}


\eat{
\subsection{Test Environments}
\label{sec:test_env}}
The experiments are conducted on a computer with Intel i5-7500 CPU and 16GB of main memory.
 
\minisection{Datasets}
We collected the two real-life datasets publicly available at \url{http://kdd.snu.ac.kr/home/datasets/tdh.php}.
\ifpublishing
\else
\fi

\birthplaces: We crawled 13,510 records about the birthplaces of 6,005 celebrities from 7 websites (sources).
For the gold standard data to evaluate the correctness of discovered birthplaces, we used IMDb biography which is available at {\em http://www.imdb.com}.
Moreover, the geographical hierarchy was created by using the IMDb data.
For example, if there is a person who was born in `LA, California, USA',
we assigned `LA' as a child of `California' and `California' as a child of `USA'.
The hierarchy contains 4,999 nodes (e.g., countries, cities and etc.) and its height is 5.


\heritages: This is a dataset of the locations of World Heritage Sites provided by UNESCO World Heritage Centre, available at {\em http://whc.unesco.org}. 
We queried about the locations of 785 World Heritage Sites with Bing Search API and obtained 4,424 claimed values from 1,577 distinct websites.
The hierarchy was created in the same way as we did for \birthplaces and it has 1,027 nodes.
The height of this hierarchy tree is 6.

\eat{We use \acc, \accgen and \avgdist to evaluate the truth discovery algorithms.
	Let $t_o$ be the truth of the object $o$ in the gold standard.
	\acc is the ratio of objects that the algorithm discovers the truth exactly (i.e., $v_o^*\eqmid t_o$).
	\accgen is the ratio of the objects that the estimated truth belongs to the set of generalized truths (i.e., $v_o^* \inmid G_o(t_o) \midsize{\cup} \{t_o\}$).
	\avgdist is the average distance between the truth $t_o$ and the estimated truth $v_o^*$ in the hierarchy $H$ where the distance is the number of edges between them.
	That is, 

	\begin{equation*}
	\label{eq:measures}
	\begin{aligned}
	(\acc) &= \frac{\sum_{o\in O}{I(v_o^*\eqmid t_o)}}{|O|},\\
	(\accgen) &= \frac{\sum_{o\in O}{I(v_o^* \inmid G_H(t_o) \!\cup\! \{t_o\})}}{|O|},
	\end{aligned}
	\end{equation*}
	\begin{equation*}
	\begin{aligned}
	(\avgdist) &= \frac{\sum_{o\in O}{d(v_o^*,t_o)}}{|O|}
	\end{aligned}
	\end{equation*}
	where $d(u,v)$ is the distance between two nodes in the hierarchy and $I(condition)$ is the indicator function which returns 1 if the $condition$ is true and 0 otherwise.
}

\minisection{Quality Measures}
	We use \acc, \accgen and \avgdist to evaluate the truth discovery algorithms.
	Let $t_o$ be the truth of the object $o$ in the gold standard and $v_o^*$ be the estimated truth by an algorithm.
	Note that $t_o$ may not exist in the set of candidate values.
	In this case, the most specific candidate value among the ancestors of the truth is assumed to be $t_o$.
	\acc is the proportion of objects that the algorithm discovers the truth exactly.
	It is actually used in \cite{DOCS,dong2015knowledge,zheng2015qasca,zheng2017truth} to evaluate truth discovery algorithms.
	\begin{equation*}
	\ifshrinkmode
	(\acc) = {\Sigma_{o\in O}{I(v_o^* = t_o)}}/{|O|}
	\else
		(\acc) = \frac{\sum_{o\in O}{I(v_o^* = t_o)}}{|O|}
	\fi
	\end{equation*}
	The ancestors of $t_o$ are less informative but still correct values.
	Thus, we develop an evaluation measure named \accgen which is the proportion of objects $o$ whose estimated truth $v_o^*$ is either the truth $t_o$ or an ancestor of the truth.
	\begin{equation*}
	\ifshrinkmode
	(\accgen) = {\Sigma_{o\in O}{I(v_o^* \inmid G_H(t_o) \!\cup\! \{t_o\})}}/{|O|}
	\else
	(\accgen) = \frac{\sum_{o\in O}{I(v_o^* \inmid G_H(t_o) \!\cup\! \{t_o\})}}{|O|}		
	\fi
	\end{equation*}
	
	Ancestors of the truth have a different level of informativeness depending on the distance to the truth.
	For example, `New York' is more informative than `USA' as the location of the Statue of Liberty. 
	Thus, we utilize another evaluation measure named \avgdist which weights the estimated truth based on the distance from the ground truth.
	More specifically, it is the average number of edges $d(v_o^*,t_o)$ between the truth $t_o$ and the estimated truth $v_o^*$ in the hierarchy $H$.
	\begin{equation*}
	\ifshrinkmode
	(\avgdist) = {\Sigma_{o\in O}{d(v_o^*,t_o)}}/{|O|}
	\else
	(\avgdist) = \frac{\sum_{o\in O}{d(v_o^*,t_o)}}{|O|}
	\fi
	\end{equation*} 
	\avgdist is robust to the case where the ground truth is less specific than the estimated truth.
	The estimated truth is regarded as a wrong value when we compute \acc and \accgen even though the estimate truth is correct and more specific.
	Since the distance between the less specific ground truth and the estimated truth is generally small, \avgdist compensates the drawback of \acc and \accgen.
	
	\eat{
		Since the distance is small, note that \avgdist compensates the drawback of \acc and \accgen.
		
		Since the distance between the ground truth and the estimated truth is small, note that \avgdist compensates the drawback of \acc and \accgen.}

\begin{table}[tb]
	\center
	\caption{Performance of truth inference algorithms}
	\scriptsize
	\vspace{-0.08in}
	\label{tab:ti}
	\setlength{\tabcolsep}{0.47em}
	\begin{tabular}{c|ccc|ccc}
		\toprule
		& \multicolumn{6}{c}{Dataset}\\
		\midrule
		& \multicolumn{3}{c|}{\birthplaces} & \multicolumn{3}{c}{\heritages}\\
		Algorithm & \acc & \accgen & \avgdist & \acc & \accgen & \avgdist \\
		\midrule
		TDH & \bf{0.8913} & \bf{0.8988} & \bf{0.3151} & \bf{0.7414} & 0.8726 & \bf{0.5210}\\
		VOTE & 0.7900 & 0.8924 & 0.4961 & 0.6892 & \bf{0.8994} & 0.6382\\
		LCA & 0.8834 & 0.8923 & 0.3414 & 0.6930 & 0.8866 & 0.6611\\
		DOCS & 0.8828 & 0.8916 & 0.3409 & 0.6904 & 0.8866 & 0.6599\\
		ASUMS & 0.8543 & 0.8571 & 0.4573 & 0.6229 & 0.7414 & 1.2000\\
		MDC & 0.8263 & 0.8432 & 0.5320 & 0.7254 & 0.8087 & 0.6869\\
		ACCU & 0.8137 & 0.8296 & 0.6063 & 0.5834 & 0.7656 & 1.0637\\
		POPACCU & 0.8133 & 0.8300 & 0.6070 & 0.6561 & 0.8586 & 0.7554\\
		LFC & 0.8085 & 0.8743 & 0.4669 & 0.6803 & 0.8076 & 0.8076\\
		CRH & 0.8083 & 0.8271 & 0.6120 & 0.6841 & 0.8828 & 0.6688\\
		\bottomrule
	\end{tabular}
	\vspace{-0.12in}
\end{table}

\minisection{Settings for simulated crowdsourcing}
To evaluate the truth discovery algorithms with varying the quality of the answers from workers, we conducted experiments with simulated crowd workers.
In our simulation, we assumed that each simulated worker answers a question correctly with its own probability $p_w$ and randomly selects an answer from the candidate values with probability $1$$-$$p_w$. 
We sampled the probability $p_w$ from a uniform distribution ranging from $\pi_p$$-$$0.05$ to $\pi_p$$+$$0.05$ where the default value of $\pi_p$ is 0.75. 
In the experiments, each of 10 worker answers 5 questions for each round.

\subsection{Implemented Algorithms}
We implemented 10 truth inference algorithms and 4 task assignment algorithms in Python for comparative experiments.
The truth inference algorithms are referred to as follows:

\begin{itemize}[topsep=1pt,itemsep=0ex,partopsep=1ex,parsep=0.5ex, leftmargin=.2in]
\ifdomain
\item{TDH}: This is our algorithm proposed in Section~\ref{TruthInference} without the extension for domain-aware truth discovery.
\else
\item{TDH}: This is our algorithm proposed in Section~\ref{TruthInference}. 
For the prior distribution $Dir(\alpha)$, we set the hyperparameter $\alpha=(3,3,2)$  since correct values are more frequent than wrong values for most of the sources. For the other hyperparameters $\beta$ and $\gamma$, we set every dimension of $\beta$ and $\gamma$ to 2. 
\fi
\item{ACCU}: It is the algorithm proposed in \cite{dong2009integrating} which considers the dependencies between sources to find the truths. The algorithm exploits Bayesian analysis to find the dependencies.
\item{POPACCU}: This denotes the algorithm in \cite{dong2012less} which extends ACCU. It computes the distribution of the false values from the records while ACCU assumes that it is uniform.
\item{LFC}: This algorithm is proposed in \cite{raykar2010learning} and utilizes a confusion matrix to model a source's quality. 
\item{CRH}: It is proposed in \cite{li2014resolving} to resolve conflicts in heterogeneous data containing categorical and numerical attributes.
\item{LCA}: It is a probabilistic model proposed in \cite{LCA}. We select GuessLCA to be compared in this paper which is one of the best performers among the 7 algorithms proposed in \cite{LCA}. 
\item{ASUMS}: This is proposed in \cite{asums} by adapting an existing method SUMS \cite{sums} to hierarchical truth discovery. 
\item{MDC}: This denotes the truth discovery method designed for medical diagnose from non-expert crowdsourcing in \cite{li2017reliable}. 
\item{DOCS}: This is the state-of-the-art technique presented in \cite{DOCS} that suggests the domain-sensitive worker model.

\item{VOTE}: This is a baseline that selects a value with the highest frequency in the claimed values. 
\vspace{-0.03in}
\end{itemize}
\vspace{0.03in}
We implemented the following task assignment algorithms.
\begin{itemize}[topsep=1pt,itemsep=0ex,partopsep=0ex,parsep=0.5ex, leftmargin=.2in]
\item \emph{EAI}: This is our proposed algorithm in Section \ref{sec:task_assignment}.
\item \emph{MB}: It is the task assignment algorithm used by DOCS \cite{DOCS}.
\item \emph{QASCA}: It is a task assignment algorithm proposed in \cite{zheng2015qasca}. 

\item \emph{ME}: This is our baseline algorithm which utilizes an uncertainty sampling.
It selects an object $o^*$ whose confidence distribution has the maximum entropy.
(i.e., $o^* = argmax_{o\in O}$ ${(-\sum_{v\in V_o}{\mu_{o,v}\cdot \log{\mu_{o,v}}})}$)
\end{itemize}

Note that \emph{EAI} and \emph{MB} are the task assignment algorithms specially designed to work with \emph{TDH} and \emph{DOCS}, respectively.
\emph{QASCA} can work with truth inference algorithms based on probabilistic models such as \emph{TDH}, \emph{DOCS}, \emph{LCA}, \emph{ACCU} and \emph{POPACCU}.
All the truth inference algorithms can be combined with \emph{ME}. 


\begin{figure}[tb]
	\centering
	\includegraphics[width=3.3in]{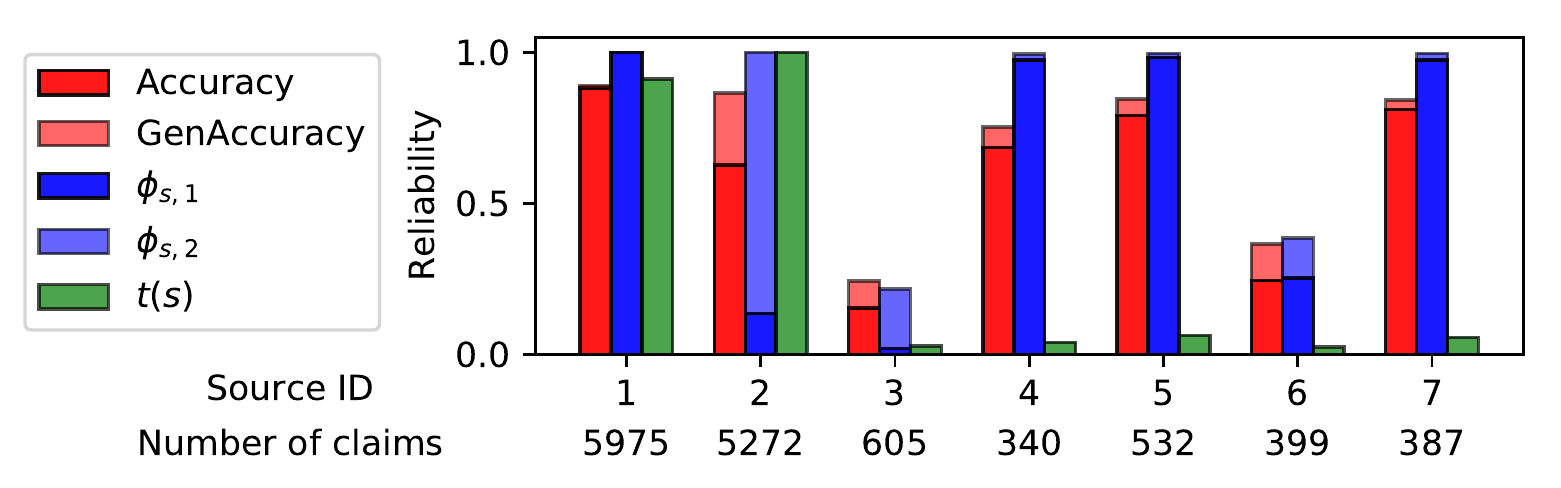}
	\vspace{-0.13in}
	\caption{Source reliability distribution in \birthplaces \label{exp:source_distribution_bar}}
	\vspace{-0.1in}
\end{figure}

\subsection{Truth Inference}
\label{truth_inference}
We first provide the performances of the truth inference algorithms without using crowdsourcing in Table.~\ref{tab:ti}.

\minisection{\birthplaces}
\eat{The average accuracy of the sources in \birthplaces is 72.1\% and every algorithm shows higher \acc than 72.1\%.}
Our TDH outperforms all other algorithms in terms of all quality measures since
TDH finds the exact truths by utilizing the hierarchical relationships. 
Since TDH estimates the reliabilities of the sources and workers by considering the hierarchies, it does not underestimate the reliabilities of the sources and workers.
Thus, TDH also finds more correct values including the generalized truths.
We will discuss the reliability estimation in detail at the end of this section by comparing TDH with ASUMS.
LCA is the second-best performer and VOTE shows the lowest \acc among all compared algorithms.
However, in terms of \accgen, VOTE performs the second-best.
It is because many websites claim the generalized values rather than the most specific value.
\ifshrinkmode
\else
As truth inference algorithms estimate the truths more specifically, the differences between \acc and \accgen become smaller.
Thus, TDH and ASUMS, which utilize the hierarchy information, have smaller differences between \acc and \accgen compared to the other algorithms. 
\fi

\minisection{\heritages}
In terms of \avgdist and \acc, TDH performs the best among those of the compared algorithms.
VOTE shows the highest \accgen because many sources provide the generalized truths.
In fact, a high \accgen with low \acc and \avgdist can be easily obtained by providing the most general values for the truths.
However, such values usually are not informative.
Since our algorithm shows much higher \acc and much lower \avgdist than VOTE, we can see that the estimated truth by TDH is more accurate and precise than the result from VOTE.
\heritages contains many sources and most of the sources have a few claims.
Thus, it is very hard to estimate the reliability of each source accurately.
Therefore, most of the compared algorithms show worse performance than VOTE in terms of \avgdist.
In particular, ACCU has the lowest \acc. 
The reason is that ACCU requires many shared objects between two sources in order to accurately determine the dependency between the sources.
The average accuracy of the sources in \heritages is 58.0\% while that of the sources in \birthplaces is 72.1\%. Thus, every algorithm shows a lower \acc in this dataset than in \birthplaces. 

%


\begin{figure}[tb]
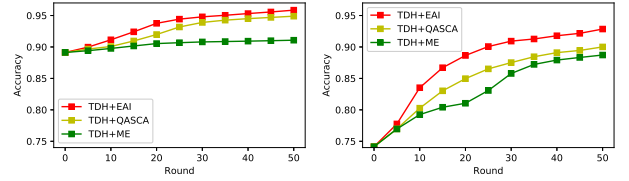

	\vspace{-0.06in}
	\centering
	\subfloat[\birthplaces]{\includegraphics[width=1.6in]{birthPlaces_q5_w10_lb0_7_ub0_8_scF/Accuracy_COMP_TA_compta_auto}%
		\label{exp:accu_bp_ta}}
	\subfloat[\heritages]{\includegraphics[width=1.6in]{whc_q5_w10_lb0_7_ub0_8_scF/Accuracy_COMP_TA_compta_auto}%
		\label{exp:accu_whc_ta}}
	\vspace{-0.05in}
	\caption{Evaluation of task assignment algorithms}
	\label{exp:crowd_accuracy_ta}
	\vspace{-0.05in}
\end{figure}

\begin{figure*}[tb]
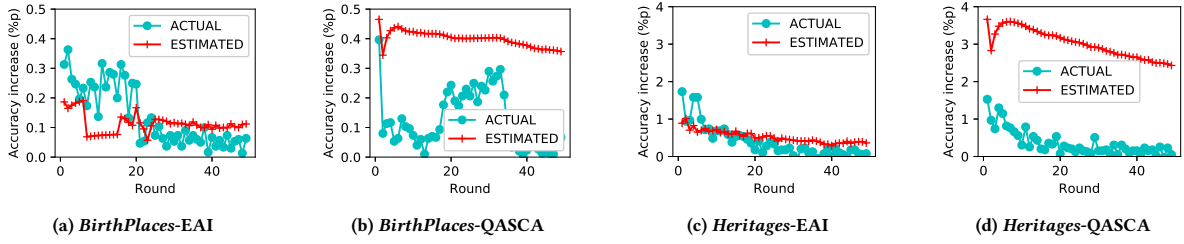

	\vspace{-0.12in}
	\centering
	\subfloat[\birthplaces -EAI]{\includegraphics[width=1.4in]{comp_ta/birthPlaces_EAI}%
		\label{exp:compta_bp_eai}}
	\hspace{0.2in}
	\subfloat[\birthplaces -QASCA]{\includegraphics[width=1.4in]{comp_ta/birthPlaces_QASCA}%
		\label{exp:compta_bp_qasca}}
	\hspace{0.2in}
	\subfloat[\heritages -EAI]{\includegraphics[width=1.35in]{comp_ta/whc_EAI}%
		\label{exp:compta_whc_eai}}
	\hspace{0.2in}
	\subfloat[\heritages -QASCA]{\includegraphics[width=1.35in]{comp_ta/whc_QASCA}%
		\label{exp:compta_whc_qasca}}
	\vspace{-0.1in}
	\caption{Actual and estimated accuracy improvement by EAI and QASCA}
	\label{exp:compta}
	\vspace{-0.1in}
\end{figure*}

\newif\ifctdtwocolumn
\ctdtwocolumntrue

\ifctdtwocolumn
\begin{figure}[tb]
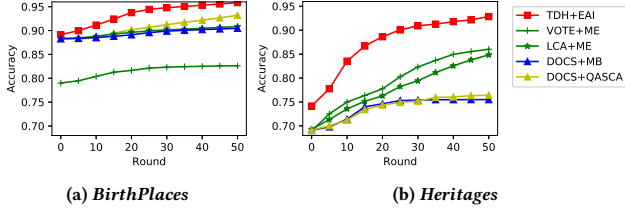

	\centering
	\subfloat[\birthplaces]{\hspace{-0.03in}
		\includegraphics[width=1.34in]{birthPlaces_q5_w10_lb0_7_ub0_8_scF/Accuracy_MAX_ENTROPY_nolegend}%
		\label{exp:accu_bp}}
	\subfloat[\heritages]{\hspace{-0.07in}
		\includegraphics[width=1.99in]{whc_q5_w10_lb0_7_ub0_8_scF/Accuracy_MAX_ENTROPY}%
		\label{exp:accu_whc}}
	\vspace{-0.09in}
	\caption{\acc with crowdsourced truth discovery}
	\label{exp:crowd_accuracy}
	\vspace{-0.11in}
\end{figure}
\begin{figure}[tb]
	\vspace{-0.07in}
	\centering
	\subfloat[\birthplaces]{\includegraphics[width=1.33in]{birthPlaces_q5_w10_lb0_7_ub0_8_scF/AccuracyGen_MAX_ENTROPY_nolegend}%
		\label{exp:accugen_bp}}
	\subfloat[\heritages]{\hspace{-0.05in}\includegraphics[width=2.00in]{whc_q5_w10_lb0_7_ub0_8_scF/AccuracyGen_MAX_ENTROPY}%
		\label{exp:accugen_whc}}
	\vspace{-0.1in}
	\caption{\accgen with crowdsourced truth discovery}
	\label{exp:crowd_accuracygen}
	\vspace{-0.13in}
\end{figure}
\begin{figure}[t!]
	\vspace{-0.07in}
	\centering
	\subfloat[\birthplaces]{\includegraphics[width=1.33in]{birthPlaces_q5_w10_lb0_7_ub0_8_scF/averagedistance_MAX_ENTROPY_nolegend}%
		\label{exp:avgdist_bp}}
	\subfloat[\heritages]{\hspace{-0.05in}\includegraphics[width=2.0in]{whc_q5_w10_lb0_7_ub0_8_scF/averagedistance_MAX_ENTROPY}%
		\label{exp:avgdist_whc}}
	\vspace{-0.1in}
	\caption{\avgdist with crowdsourced truth discovery}
	\label{exp:crowd_avgdist}
	\vspace{-0.1in}
\end{figure}
\else
\begin{figure*}[hb]
	\centering
	\subfloat[\acc]{\includegraphics[width=1.75in]{birthPlaces_q5_w10_lb0_7_ub0_8_scF/Accuracy_MAX_ENTROPY_nolegend}%
		\label{exp:accu_bp}}
	\subfloat[\accgen]{\includegraphics[width=1.75in]{birthPlaces_q5_w10_lb0_7_ub0_8_scF/AccuracyGen_MAX_ENTROPY_nolegend}%
		\label{exp:accugen_bp}}
	\subfloat[\avgdist]{\includegraphics[width=2.5in]{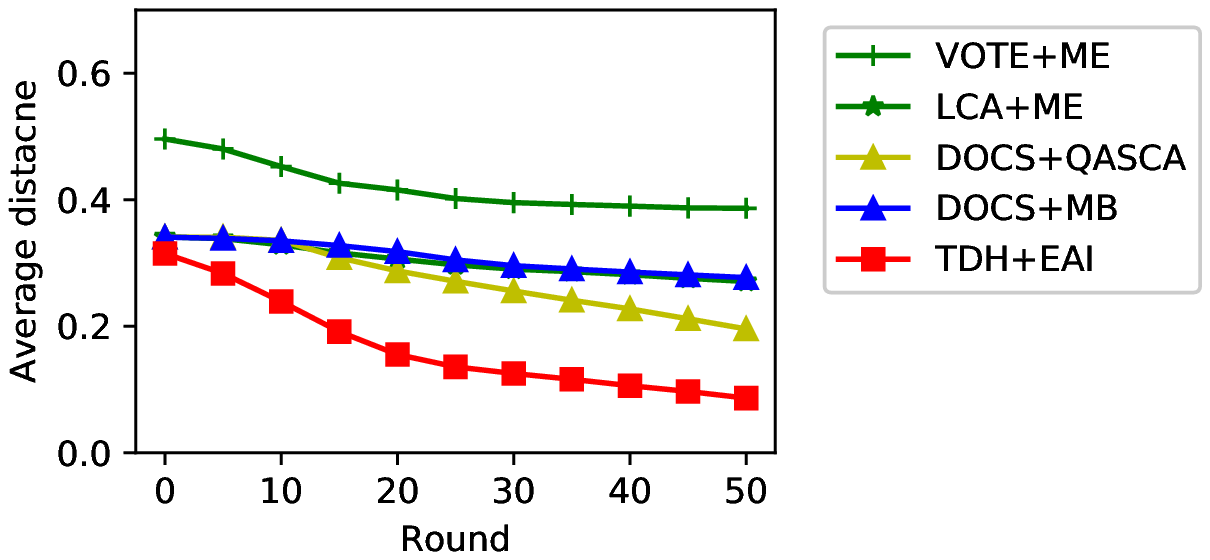}%
		\label{exp:avgdist_bp}}
	\caption{Crowdsourced truth discovery in \birthplaces}
	\label{exp:crowd_bp}
\end{figure*}
\begin{figure*}[hb]
	\centering
	\subfloat[\acc]{\includegraphics[width=1.75in]{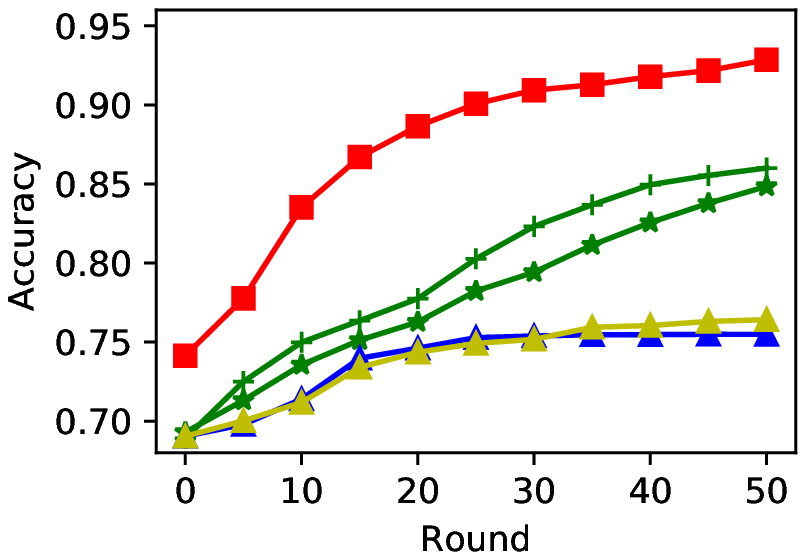}%
		\label{exp:accu_whc}}
	\subfloat[\accgen]{\includegraphics[width=1.75in]{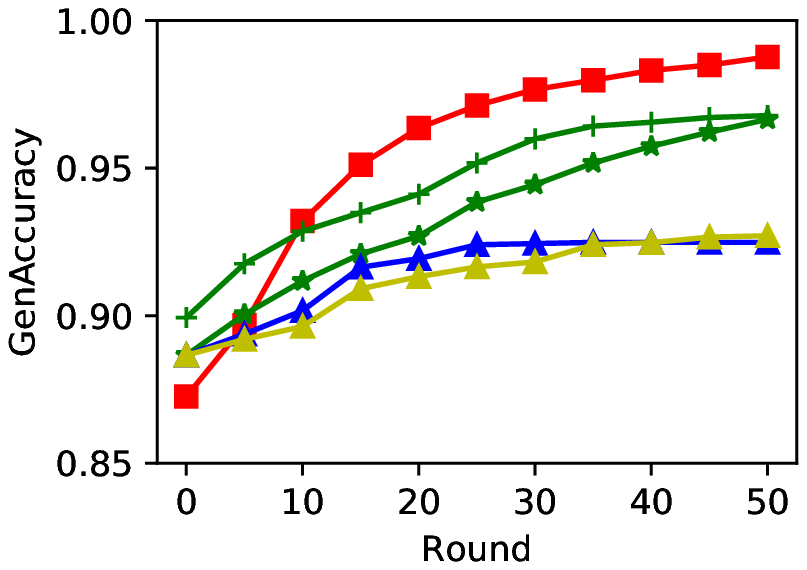}%
		\label{exp:accugen_whc}}
	\subfloat[\avgdist]{\includegraphics[width=2.5in]{whc_q5_w10_lb0_7_ub0_8_scF/averagedistance_MAX_ENTROPY}%
		\label{exp:avgdist_whc}}
	\caption{Crowdsourced truth discovery in \heritages \label{exp:crowd_whc}}
\end{figure*}
\fi

\minisection{Comparison with ASUMS}
Since ASUMS \cite{asums} is the only existing algorithm which utilizes hierarchies for truth inference, we show the statistics related to the reliability distributions estimated by TDH and ASUMS for \birthplaces dataset in \figurename~\ref{exp:source_distribution_bar}.
\acc and \accgen represent the actual reliabilities of each source computed from the ground truths.
Recall that $\phi_{s,1}$ and $\phi_{s,2}$ are the estimated probabilities of providing a correct value and a generalized correct value respectively for a source $s$ by our TDH, as defined in Section~\ref{TruthInference}.
In addition, $t(s)$ is the estimated reliability of a source $s$ by ASUMS which ignores the generalization level of each source.
\ifshrinkmode
\else
In each source $s$, the leftmost bar denotes accgen where the portion of \acc is also shown, the middle stacked bar shows $\phi_{s,1}$ and $\phi_{s,2}$ together, and the rightmost bar represents $t(s)$. 
\fi
The reliabilities of the sources 4, 5 and 7 computed by ASUMS (i.e. $t(s)$) are quite different from the actual reliabilities (i.e., \acc).
As we discussed in Section~\ref{intro}, for a pair of sources that provide different claimed values with an ancestor-descendant relationship in a hierarchy, existing methods may assume that one of the claimed values is incorrect. 
Thus, the reliability of the source with the assumed wrong value tends to become lower by the existing methods.
ASUMS suffers from the same problem and underestimates the reliabilities of the sources 4, 5 and 7 which provide a small number of claimed values. 
Meanwhile, our proposed algorithm TDH accurately estimates the reliabilities of the sources by introducing another class of the claimed values (generalized truth). 

\begin{table}[b]
	\center
	\vspace{-0.05in}
	\caption{\acc of the algorithms after the 50th round}
	\label{tab:crowdsourced50}
	\scriptsize
	\vspace{-0.03in}
	\setlength{\tabcolsep}{0.6em}
	\def\arraystretch{1.0}
	\begin{tabular}{c|cccc|cccc}
		\toprule
		& \multicolumn{4}{c}{\birthplaces}& \multicolumn{4}{c}{\heritages}\\
		\midrule
		& EAI & MB & QASCA & ME& EAI & MB & QASCA & ME\\
		\midrule
		TDH & {\bf 0.9601} & -  & {\bf 0.9500} & {\bf 0.9109} & {\bf 0.9304} & -  & {\bf 0.8999} & {\bf 0.8884}\\
		DOCS & -  & {\bf 0.9052} & \underline{0.9341} & 0.8842 & -  & {\bf 0.7546} & \underline{0.7661} & 0.7631\\
		LCA & -  & -  & 0.8823 & \underline{0.9089} & -  & -  & 0.7136 & 0.8507\\
		POPACCU & -  & -  & 0.9295 & 0.8987 & -  & -  & 0.7512 & 0.8336\\
		ACCU & -  & -  & 0.8468 & 0.8257 & -  & -  & 0.5796 & 0.5896\\
		ASUMS & -  & -  & -  & 0.8700 & -  & -  & -  & 0.7427\\
		CRH & -  & -  & -  & 0.9000 & -  & -  & -  & 0.8459\\
		MDC & -  & -  & -  & 0.8254 & -  & -  & -  & 0.7241\\
		LFC & -  & -  & -  & 0.8287 & -  & -  & -  & 0.7327\\
		VOTE & -  & -  & -  & 0.8261 & -  & -  & -  & \underline{0.8634}\\
		\bottomrule
	\end{tabular}
	
	\vspace{-0.05in}
\end{table}

%

\subsection{Task Assignment}

Before providing the full comparison of all possible combinations of truth inference algorithms and task assignment algorithms,
we first evaluate the task assignment algorithms with our proposed truth inference algorithm.
We plotted the average \acc of the truth discovery algorithms with different task assignment algorithms for every 5 round in \figurename~\ref{exp:crowd_accuracy_ta}.
The points at the 0-th round represent the \acc of the algorithms without crowdsourcing.
All algorithms show the same \acc at the beginning since they use the same truth inference algorithm TDH.
As the round progresses, the \acc of TDH+EAI increases faster than those of all other algorithms.
The \acc of TDH+ME is the lowest since ME selects a task based only on the uncertainty without estimating the accuracy improvement by the task.

As discussed in Section~\ref{sec:quality_measure}, our task assignment algorithm EAI estimates the accuracy improvement by considering the number of existing claimed values and the confidence distribution whereas QASCA considers the confidence distribution only.
We plotted the actual and estimated accuracy improvements by EAI and QASCA in \figurename~\ref{exp:compta}.
The graphs show that the estimated accuracy improvement by EAI is similar to the actual accuracy improvement while QASCA overestimates the accuracy improvement at every round.
On average, the absolute estimation errors from EAI are 0.08 and 0.26 percentage points (pps) while those errors from QASCA are 0.28 and 2.66 pps in \birthplaces and \heritages datasets, respectively.
This result confirms that EAI outperforms QASCA by effectively estimating the accuracy improvement.
In terms of the other quality measures \accgen and \avgdist, our proposed EAI also outperforms the other task assignment algorithms in both datasets.
Due to the lack of space, we omit the results with the other quality measures.

\subsection{Simulated Crowdsourcing}
\label{sec:expCATD}
We evaluate the performance of crowdsourced truth discovery algorithms with the simulated crowdsourcing.

For all possible combinations of the implemented truth inference and task assignment algorithms, we show the \acc after 50 rounds of crowdsourcing in Table~\ref{tab:crowdsourced50} where the impossible combinations are denoted by `-'.
As expected, TDH+EAI has the highest \acc in both datasets for all possible combinations.
\eat{
Since TDH+EAI shows higher \acc than TDH+QASCA in both dataset, we can see that our task assignment algorithm effectively selects the task which is likely to increase the accuracy the most.} 
\red{
The result also shows that both TDH and EAI contribute to increasing \acc.
The improvement obtained by EAI can be estimated by comparing the result of TDH+EAI to that of the second performer TDH+QASCA.
The accuracies of TDH+EAI in \birthplaces and \heritages datasets are 1 and 3 percentage points (pps) higher than those of TDH+QASCA, respectively.
In addition, for each combined task assignment algorithm, the improvement by TDH can be inferred by comparing the results with those of other truth inference algorithms.
In both datasets, TDH shows the highest \acc among the applicable truth inference algorithms for each task assignment algorithm.
For example, TDH+QASCA shows 2.6 and 13 pps higher \acc in \birthplaces and \heritages datasets, respectively, than the second performer DOCS+QASCA among the combinations with QASCA.}
In the rest of the paper, we report \acc, \accgen and \avgdist of TDH+EAI, DOCS+MB, DOCS+QASCA, LCA+ME and VOTE+ME only since these combinations are the best or the second-best for each task assignment algorithm. 
%

\minisection{Cost efficiency}
\ifctdtwocolumn
We plotted the average \acc of the tested algorithms for every 5 rounds in Figure~\ref{exp:crowd_accuracy}.
\else
We plotted the average \acc of the tested algorithms for every 5 rounds in Figures~\ref{exp:crowd_bp} (a) and \ref{exp:crowd_whc}(a).
\fi
TDH+EAI shows the highest \acc for every round in both datasets.
For the \birthplaces dataset, DOCS+QASCA was the next best performer which achieved 0.9341 of \acc at the 50-th round.
Meanwhile, TDH+EAI only needs 17 rounds of crowdsourcing to achieve the same \acc.
Thus, TDH+EAI saved 66\% of crowdsourcing cost compared to the second-best performer DOCS+ QASCA.
Likewise, TDH+EAI reduced the crowdsourcing cost 74\% in \heritages dataset compared to the next performer. 
\ifctdtwocolumn
In terms of \accgen and \avgdist, TDH+EAI also outperforms all the other algorithms as plotted in Figure~\ref{exp:crowd_accuracygen} and Figure~\ref{exp:crowd_avgdist}.
\else
In terms of \accgen and \avgdist, TDH+EAI also outperforms all the other algorithms as plotted in Figure~\ref{exp:crowd_bp} and Figure~\ref{exp:crowd_whc}.
\fi
The results confirm that TDH+EAI is the most efficient as it achieves the best qualities in terms of both \acc and \accgen.

%


\minisection{Varying $\pi_p$}
We plotted the average \acc of all algorithms with varying the probability of correct answer $\pi_p$ of simulated workers for \birthplaces and \heritages datasets in Figure~\ref{exp:varyingMu}(a) and Figure~\ref{exp:varyingMu}(b), respectively.
As we can easily expect, the accuracies increase with growing $\pi_p$ for most of the algorithms.
For both datasets, TDH+EAI achieves the best accuracy with all values of $\pi_p$.
In Heritages dataset, a source provided less than 10 claims on average and it makes difficult for truth discovery algorithms to estimate the reliabilities of sources.
Therefore, the baseline VOTE+ME shows good performance on Heritages dataset.
Meanwhile, the performance of the state-of-the-art DOCS is significantly degraded on the Heritages dataset.

\begin{figure}[tb]
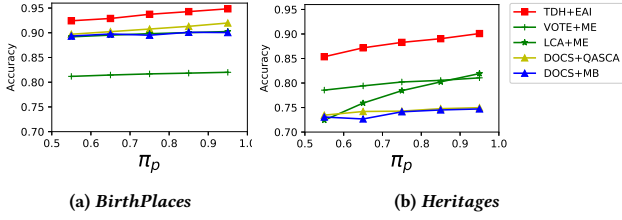

	\hspace{-0.05in}	
	\subfloat[\birthplaces]{\includegraphics[width=1.37in]{birthPlaces_q5_w10_scF/varyingMu_q5_w10_MAX_ENTROPY}%
	\label{exp:varyingMuBP}}
	\subfloat[\heritages]{\hspace{-0.06in}\includegraphics[width=1.97in]{whc_q5_w10_scF/varyingMu_q5_w10_MAX_ENTROPY}%
	\label{exp:varyingMuWHC}}
\vspace{-0.1in}
\caption{Varying $\pi_p$\label{exp:varyingMu}}
\vspace{-0.13in}
\end{figure}

\ifregex
\begin{figure}[tb]
\centering
	\subfloat[BirthPlaces]{\includegraphics[width=1.62in]{birthPlaces_q5_w10_lb0_7_ub0_8_scF/RegEffect_Accuracy_auto}%
	\label{exp:regeffectBP}}
	\hspace{-0.06in}
	\subfloat[Heritages]{\includegraphics[width=1.62in]{whc_q5_w10_lb0_7_ub0_8_scF/RegEffect_Accuracy_auto}%
	\label{exp:regeffectWHC}}
\caption{Effect of the regularization\label{exp:regeffect}}
\vspace{-0.12in}
\end{figure}

\vspace{0.05in}
{\bf Effect of the regularization:}
We plotted \acc of the proposed algorithm with and without the regularization in Figure~\ref{exp:regeffect}.
As we discussed in Section~\ref{TruthInference}, if we do not apply the regularization, probabilistic models suffer from the overfitting problem.
Therefore, \acc does not increase quickly with the answers from workers.
\fi

\begin{figure}[tb]
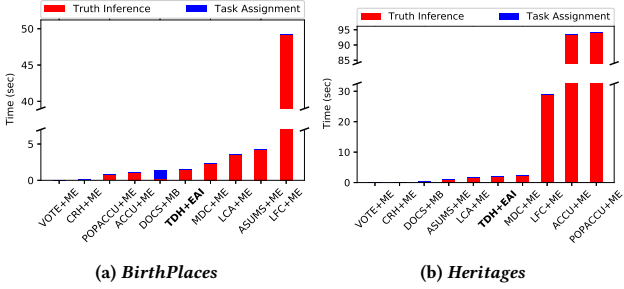

\centering
\subfloat[\birthplaces]{\includegraphics[width=1.64in]{birthPlaces_q5_w10_lb0_7_ub0_8_scF/Time_ME}%
\label{exp:time_bp}}
\subfloat[\heritages]{\includegraphics[width=1.64in]{whc_q5_w10_lb0_7_ub0_8_scF/Time_ME}%
\label{exp:time_whc}}
\vspace{-0.1in}
\caption{Execution time per round\label{exp:execution_time}}
\vspace{-0.13in}
\end{figure}

\begin{figure}[t]
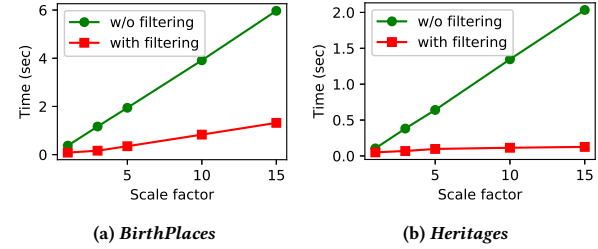

\centering
	\subfloat[\birthplaces]{\includegraphics[width=1.52in]{birthPlaces_q5_w10_lb0_7_ub0_8_dup/ScaleAssign}%
	\label{exp:scaleAssignBP}}
	\subfloat[\heritages]{\includegraphics[width=1.6in]{whc_q5_w10_lb0_7_ub0_8_dup/ScaleAssign}%
	\label{exp:scaleAssignWHC}}
\vspace{-0.06in}
\caption{Execution time for task assignment per round\label{exp:scaleAssign}}
\vspace{-0.1in}
\end{figure}

\minisection{Execution times} 
We plotted the average execution times of the tested algorithms over every round in Figure \ref{exp:execution_time}.
VOTE, CRH+ME, DOCS+MB and TDH+EAI run in less than 2.0 seconds per round on average for both datasets.
Other algorithms except for ACCU+ME, POPACCU+ME and LFC+ME also take less than 5 seconds, which is acceptable for crowdsourcing.
Since LFC builds the confusion matrix whose size is the square of the number of candidate values, LFC is the slowest with \birthplaces data.
On the other hand, for \heritages dataset which is collected from much more sources than \birthplaces dataset,
ACCU and POPACCU take longer time for truth inference to calculate the dependencies between sources.

\ifdomain

\begin{figure*}[tb]
\centering
\subfloat[\acc]{\includegraphics[width=2.0in]{birthPlaces_domain2_q5_w10_lb0_5_ub1_0_scF/Accuracy}
\label{exp:domain_accu_bp}}
\vspace{-0.04in}
\subfloat[\accgen]{\includegraphics[width=2.0in]{birthPlaces_domain2_q5_w10_lb0_5_ub1_0_scF/AccuracyGen}
\label{exp:domain_accugen_bp}}
\vspace{-0.04in}
\subfloat[\avgdist]{\includegraphics[width=2.0in]{birthPlaces_domain2_q5_w10_lb0_5_ub1_0_scF/averagedistance}
\label{exp:domain_avgdist_bp}}
\vspace{-0.04in}
\caption{Domain-aware truth discovery in \birthplaces \label{exp:domain_crowd_bp}}
\vspace{-0.17in}
\end{figure*}

\begin{figure*}[tb]
\centering
\subfloat[\acc]{\includegraphics[width=2.0in]{whc_domainC_q5_w10_lb0_5_ub1_0/Accuracy}%
\label{exp:domain_accu_whc}}
\vspace{-0.04in}
\subfloat[\accgen]{\includegraphics[width=2.0in]{whc_domainC_q5_w10_lb0_5_ub1_0/AccuracyGen}%
\label{exp:domain_accugen_whc}}
\vspace{-0.04in}
\subfloat[\avgdist]{\includegraphics[width=2.0in]{whc_domainC_q5_w10_lb0_5_ub1_0/averagedistance}%
\label{exp:domain_avgdist_whc}}
\vspace{-0.04in}
\caption{Domain-aware truth discovery in \heritages \label{exp:domain_crowd_whc}}
\vspace{-0.17in}
\end{figure*}

\fi

\minisection{Effects of the filtering for task assignments}
To test the scalability of our algorithm, we increase the size of both datasets by duplicating the data by upto 15 times.
In Figure~\ref{exp:scaleAssign}, with increasing data size, we plotted the execution times of our task assignment algorithm EAI with and without exploiting the upper bound proposed in Section~\ref{sec:upperbound}.
The filtering technique saved 78\% and 94\% of the computation time for the task assignment at the scale factor 15. 
The graphs show that the proposed upper bound enables us to scale for large data effectively.
For the total execution time, including the truth inference, the filtering reduced 21\% and 6\% of the execution time on \birthplaces and \heritages respectively at the scale factor 15.

\ifdomain

\subsection{Domain-Aware Truth Discovery}
\label{sec:expDATD}
To evaluate the performance of domain-aware truth discovery, we manually labeled the domain of each object in both datasets. 
For \birthplaces dataset, we divided the people into those born in the United States and those born in other countries. 
For \heritages dataset, we split the objects into 3 domains (i.e.,  cultural site, natural site and mixed site) based on the categories used in UNESCO World Heritage Centre website. 
We implemented TDHD+EAI by extending TDH+EAI for domain-aware truth discovery as presented in Section~\ref{sec:extension}.
Note that we gave the domains of objects as input to both TDHD+EAI and DOCS+MB for the experiments here.  
If an object $o$ belongs to the $i$-th category, the $i$-th dimension of the domain vector $\pmb{r_o}=[r_{o,1},r_{o,2},...,r_{o,D}]$ is set to $1-\varepsilon$ and the others are set to $\frac{\varepsilon}{D-1}$ (i.e., $r_{o,i}=1-\varepsilon$ and $r_{o,j}=\frac{\varepsilon}{D-1}$ for $j \neq  i$).
We set $\varepsilon$ to 0.3 in this experiment.
In addition, the probability for answering the truth is randomly sampled from 0.5 to 1.0 for each domain.
We also compared the performance of TDH+EAI and DOCS+MB which are the best and second performers for the crowdsourced truth discovery without the domain information.
We plotted the accuracies of the algorithms with \birthplaces and \heritages in Figure~\ref{exp:domain_crowd_bp} and \ref{exp:domain_crowd_whc}, respectively.
As shown in the graphs, our TDHD+EAI significantly outperforms DOCS+MB even in the domain-aware truth discovery problem by effectively utilizing the hierarchies of values for both datasets.
In \birthplaces dataset, the performance gap between TDHD+EAI and TDH+EAI is negligible.
However, TDHD+EAI outperforms TDH+EAI after third round in terms of \acc.


\fi 
%
%
%

\newif\ifrealtwocol
\realtwocoltrue

\subsection{Crowdsourcing with Human Annotators}
\label{sec:expReal}

We evaluated the performance of the truth discovery algorithm by crowdsourcing real human annotations.
For this experiment, we selected DOCS+QASCA, DOCS+MB and LCA+ME for comparison with the proposed algorithm TDH+EAI.
This is because they are the best existing algorithms for each task assignment algorithm. 
We conducted this experiment with 10 human annotators for 20 rounds on our own crowdsourcing system.
For each worker, we assigned 5 tasks in each round.
\ifrealtwocol
Figure~\ref{exp:crowd_accuracy_real},\ref{exp:crowd_accuracygen_real} and \ref{exp:crowd_avgdist_real} show the performances of the algorithms against the rounds.
\else
Figure~\ref{exp:real_crowd_bp} and \ref{exp:real_crowd_whc} show the performances of the algorithms against the rounds.
\fi
For both of the datasets, the results confirm that the proposed TDH+EAI algorithm outperforms the compared algorithms as in the previous simulations.
Without crowdsourcing, the other algorithms show a higher \accgen than TDH for \heritages dataset, because these algorithms tend to estimate the truths with more generalized form than TDH does.
However, TDH+EAI shows the highest \accgen after the 3rd round because it correctly estimates the reliabilities and the generalization levels of the sources by using the hierarchy.
For \birthplaces dataset, \accs of the algorithms increase a little bit faster than those in the experiment with simulated crowdsourcing.
However, for \heritages dataset, \accs of the algorithms increase much slower than in the experiment with simulated crowdsourcing.
It seems that finding the locations of a world heritages is a quite harder task than finding the birthplaces of celebrities because the birthplaces are often big cities (such as LA), which are familiar to workers, but World Cultural Heritages and World Natural Heritages are often located in unfamiliar regions.

\ifrealtwocol
\begin{figure}[tb]
	\centering
	\subfloat[\birthplaces]{\includegraphics[width=1.32in]{birthPlaces_Real_q5/Accuracy_MAX_ENTROPY_nolegend}%
		\label{exp:accu_bp_real}}
	\subfloat[\heritages]{\hspace{-0.05in}\includegraphics[width=2.03in]{whc_Real_q5/Accuracy_MAX_ENTROPY}%
		\label{exp:accu_whc_real}}
	\vspace{-0.06in}
	\caption{\acc with human annotations}
	\label{exp:crowd_accuracy_real}
	\vspace{-0.09in}
\end{figure}
\begin{figure}[tb]
	\centering
	\subfloat[\birthplaces]{\includegraphics[width=1.32in]{birthPlaces_Real_q5/AccuracyGen_MAX_ENTROPY_nolegend}%
		\label{exp:accugen_bp_real}}
	\subfloat[\heritages]{\hspace{-0.05in}\includegraphics[width=2.03in]{whc_Real_q5/AccuracyGen_MAX_ENTROPY}%
		\label{exp:accugen_whc_real}}
	\vspace{-0.06in}
	\caption{\accgen with human annotations}
	\label{exp:crowd_accuracygen_real}
	\vspace{-0.09in}
\end{figure}
\begin{figure}[tb]
	\centering
	\subfloat[\birthplaces]{\includegraphics[width=1.36in]{birthPlaces_Real_q5/averagedistance_MAX_ENTROPY_nolegend}%
		\label{exp:avgdist_bp_real}}
	\subfloat[\heritages]{\hspace{-0.05in}\includegraphics[width=1.98in]{whc_Real_q5/averagedistance_MAX_ENTROPY}%
		\label{exp:avgdist_whc_real}}
	\caption{\avgdist with human annotations}
	\vspace{-0.06in}
	\label{exp:crowd_avgdist_real}
	\vspace{-0.09in}
\end{figure}
\else
\begin{figure}[t]
	\includegraphics[width=3.3in]{birthPlaces_Real_q5/realcrowdsourcing_round}
	\caption{Crowdsourced truth discovery in \birthplaces \label{exp:real_crowd_bp}}
	\vspace{-0.1in}
\end{figure}	
\begin{figure}[t!]
	\includegraphics[width=3.3in]{whc_Real_q5/realcrowdsourcing_round}
	\caption{Crowdsourced truth discovery in \heritages\label{exp:real_crowd_whc}}
\end{figure}	
\fi

\subsection{Crowdsourcing with AMT}
We evaluate the performances of TDH+EAI, DOCS+QASCA, DOCS+MB and LCA+ME based on the answers collected from Amazon Mechanical Turk (AMT). 
We collected answers for all objects in \heritages dataset from 20 workers in AMT.
We made the collected answers available at \url{http://kdd.snu.ac.kr/home/datasets/tdh.php}.
To evaluate the algorithms based on the collected answers, we assign 5 tasks for each worker in a round.
We plotted the performance of the algorithms in \figurename~\ref{exp:real_crowd_whc_amt}.
Since we use more workers than we did in Section~\ref{sec:expReal}, the performances improve a little bit faster, but the trends are very similar to those with 10 human annotators in the previous section. 
We observe that our TDH+EAI outperforms all compared algorithms even with a commercial crowdsourcing platform. 


\begin{figure}[t]
	\vspace{-0.08in}
	\includegraphics[width=3.3in]{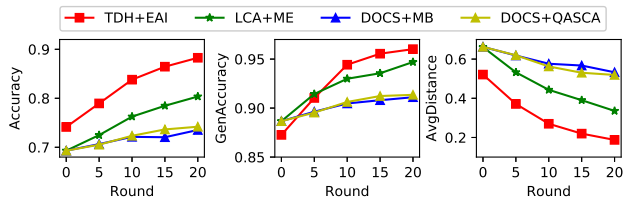}
	\vspace{-0.13in}
	\caption{Crowdsourced truth discovery in \heritages \label{exp:real_crowd_whc_amt}}
	\vspace{-0.1in}
\end{figure}
\begin{table}[b]
	\center
	\vspace{-0.06in}
	\caption{Performance of truth discovery algorithms}
	\vspace{-0.06in}
	\footnotesize
	\label{tab:mt}
	\setlength{\tabcolsep}{0.6em}
	\begin{tabular}{cc|ccc|ccc}
		\toprule
		&& \multicolumn{6}{c}{Dataset}\\
		\midrule
		&& \multicolumn{3}{c|}{\birthplaces} & \multicolumn{3}{c}{\heritages}\\
		\multicolumn{2}{c|}{Algorithm} & Precision & Recall & F1 & Precision & Recall & F1 \\
		\midrule
		\multirow{10}{*}{\makecell{Single\\truth}} & TDH&{\bf{0.899}}&{\bf{0.921}}&{\bf{0.910}}&0.873&0.795&{\bf{0.832}}\\
		& VOTE&0.892&0.804&0.846&{\bf{0.899}}&0.717&0.798\\
		& LCA&0.892&0.913&0.903&0.878&0.711&0.786\\
		& DOCS&0.892&0.913&0.902&0.887&0.722&0.796\\
		& ASUMS&0.857&0.888&0.872&0.741&0.660&0.698\\
		& POPACCU&0.847&0.858&0.852&0.859&0.694&0.768\\
		& LFC&0.874&0.838&0.856&0.808&0.727&0.765\\
		& MDC&0.844&0.853&0.848&0.807&0.792&0.800\\
		& ACCU&0.830&0.842&0.836&0.766&0.631&0.692\\
		& CRH&0.827&0.833&0.830&0.883&0.716&0.791\\
		\midrule
		\multirow{3}{*}{\makecell{Multi\\-truths}}&LFC-MT&0.763&0.723&0.742&0.898&0.684&0.777\\
		&DART&0.590&0.855&0.698&0.357&{\bf{0.994}}&0.525\\
		&LTM&0.780&0.472&0.588&0.871&0.672&0.759\\
		\bottomrule
	\end{tabular}
\end{table}
\begin{table}[b]
	\center
	\vspace{-0.06in}
	\caption{Performance evaluation for numerical data}
	\footnotesize
	\vspace{-0.06in}
	\label{tab:num}
	\setlength{\tabcolsep}{0.7em}
	\begin{tabular}{c|cc|cc|cc}
		\toprule
		& \multicolumn{2}{c|}{Change rate} & \multicolumn{2}{c}{Open price} & \multicolumn{2}{|c}{EPS}\\
		\midrule
		Algorithm& MAE & R/E & MAE & R/E & MAE & R/E\\ 
		\midrule
		TDH&{\textBF{0.0006}}&{\textBF{0.1011}}&{\textBF{0.0195}}&{\textBF{0.0354}}&{\textBF{0.0352}}&{\textBF{1.9513}}\\
		LCA&{\textBF{0.0006}}&{\textBF{0.1011}}&{\textBF{0.0195}}&{\textBF{0.0354}}&0.3831&16.2212\\
		CRH&0.0020&1.6339&{\textBF{0.0195}}&{\textBF{0.0354}}&0.0610&1.9882\\
		CATD&0.0104&2.3529&0.0211&0.0395&0.0803&3.2059\\
		VOTE&{\textBF{0.0006}}&{\textBF{0.1011}}&{\textBF{0.0195}}&{\textBF{0.0354}}&0.0765&2.8402\\
		MEAN&0.2837&30.8747&0.4047&0.5782&0.1762&7.3937\\
		\bottomrule
	\end{tabular}
\end{table}

\subsection{Multi-truths Discovery Algorithms}
Since there are multiple correct values including generalized values,
\eat{Since the truth discovery with hierarchies finds multiple correct values that are the exactly correct value as well as its generalized values,}
we also implement multi-truth discovery algorithms such as DART\cite{lin2018domain}, LFC\cite{raykar2010learning} and LTM\cite{zhao2012bayesian} to compare with our TDH algorithm.
Since the multi-truths discovery algorithms independently generate the correct values, they may output the true values where there exist a pair of true values without ancestor-descendant relationship in the hierarchy.
For example, from the given claimed values in Table~\ref{expl_records}, the multi-truth algorithms can answer that the `Statue of Liberty` is located in LA and Liberty island.
In this case, we cannot evaluate the result by our evaluation measures \acc, \accgen and \avgdist.
Thus, to evaluate the performance of the tested algorithms, we utilize precision, recall and F1-score which are the evaluation measures typically used for multi-truths discovery.
To use the multi-truths algorithms and the evaluation measures, we treat the ancestors of $v$ and $v$ itself as the multi-truths of $v$.
LFC can work as either a single truth algorithm or a multi-truths algorithm.
We refer to the multi-truth version of LFC as LFC-MT to avoid the confusion.

\tablename~\ref{tab:mt} shows the performance of the truth discovery algorithms in terms of precision, recall and F1-score. 
For both datasets, the TDH algorithm is the best in terms of F1-score.
Recall that the VOTE algorithm tends to find a generalized value of the exact truth.
Since a generalized truth generates a small number of multi-truths, the VOTE algorithm shows the highest precision in \heritages dataset.
However, since its recall is much lower than that of our TDH algorithm, the F1-score of the VOTE algorithm is lower than that of the TDH algorithm.
Similarly, although the DART algorithm has the highest recall in \heritages dataset, the precision of the DART algorithm is the smallest among the precisions of all compared algorithms.

\subsection{Performance on a Numerical Dataset}
To evaluate the extension to numerical data, we conducted an experiment on the stock datatset \cite{li2012truth}  which is trading data of 1000 stock symbols from 55 sources on every work day in July 2011. The detailed description of the data can be found in \cite{li2012truth}.
As we discussed at the end of Section~\ref{sec:emalgorithm}, we can utilize our TDH algorithm for numeric dataset with implied hierarchy.
We select three attributes `change rate', `open price' and `EPS' of the dataset, and compared our TDH algorithm with the LCA, CRH, CATD\cite{li2014confidence}, VOTE and MEAN algorithms.
Among the second best performers DOCS and LCA in Table~\ref{tab:crowdsourced50}, we use only LCA for this experiment since DOCS requires the domain information while it is not available for this dataset.
In addition to LCA, we implemented and tested the two algorithms CRH\cite{li2014resolving} and CATD\cite{li2014confidence} which are designed to find the truth in numerical data.
Recall that VOTE is a baseline algorithm which selects the candidate value collected from majority sources.
We also implemented a baseline algorithm, called MEAN, which estimates the correct value as the average of the claimed numeric values. 

Table~\ref{tab:num} shows the mean squared error (MAE) and the relative error (R/E) of the tested algorithms.
The TDH algorithm performs the best for every attribute.
The MEAN and CATD algorithms show worse performance than the other algorithms.
Since they utilize an average or a weighted average of the claimed values, they are sensitive to outliers. 
The result confirms that our TDH algorithm is effective even for numerical data.

\section{Related Work}
\label{sec:related_work}

The problem of resolving conflicts from multiple sources (i.e., truth discovery) has been extensively studied \cite{demartini2012zencrowd,dong2009integrating, dong2012less,wan2016truth,raykar2010learning,li2014resolving, li2017reliable, li2015discovery,yin2008truth,zhao2012probabilistic,DOCS, zhao2012bayesian, lin2018domain, demartini2012zencrowd,whitehill2009whose,dawid1979maximum, zhou2012learning,kim2012bayesian}.
Truth discovery for categorical data has been addressed in \cite{demartini2012zencrowd,dong2009integrating, dong2012less,  raykar2010learning,li2014resolving,DOCS,li2017reliable, yin2008truth}.
According to a recent survey \cite{zheng2017truth}, LFC\cite{raykar2010learning} and CRH\cite{li2014resolving} perform the best in an extensive experiment with the truth discovery algorithms \cite{zheng2017truth,li2014confidence,demartini2012zencrowd,whitehill2009whose,dawid1979maximum, zhou2012learning,kim2012bayesian}.
There exist other interesting algorithms \cite{dong2009integrating,dong2012less,li2017reliable,DOCS} which are not evaluated together in \cite{zheng2017truth}.
Accu\cite{dong2009integrating} and PopAccu\cite{dong2012less} combine the conflicting values extracted from different sources for the knowledge fusion \cite{KnowledgeFusion}.
They consider the dependencies between data sources to penalize the copiers' claims.
DOCS\cite{DOCS} utilizes the domain information to consider the different levels of worker expertises on various domains. 
MDC\cite{li2017reliable} is a truth discovery algorithm devised for crowdsourcing-based medical diagnosis. 
The works in \cite{li2015discovery, wan2016truth,zhao2012probabilistic} studied how to resolve conflicts in numerical data from multiple sources. 

The truth discovery algorithms in \cite{zhao2012probabilistic,zhao2012bayesian,LCA,DOCS} are based on probabilistic models. 
Resolving the conflicts in numerical data is addressed in \cite{zhao2012probabilistic} and 
discovering multiple truths for an object is studied in \cite{zhao2012bayesian}. 
Probabilistic models for finding a single truth for each object is proposed in \cite{LCA,DOCS}. 
However, none of those algorithms exploit the hierarchical relationships of claimed values for truth discovery.
The work in \cite{asums} adopts an existing algorithm to consider hierarchical relationships. 
\eat{To find the true value for each object, it greedily traverses down the hierarchy tree from the root until the confidence on the node is higher than the given thresholds $\theta$.
Thus, it
However, it requires a threshold to control the granularity of the inferred truth. 
On the contrary, our proposed algorithm automatically finds the truth without any given threshold.}

Task assignment algorithms\cite{boim2012asking,ho2013adaptive,zheng2015qasca,mavridis2016using,DOCS,fan2015icrowd} in crowdsourcing have been studied widely in recent years.
The works in \cite{boim2012asking,zheng2015qasca,DOCS} can be applied to our crowdsourced truth discovery.
For task assignment, AskIt\cite{boim2012asking} selects the most uncertain object for a worker.
Meanwhile, the task assignment algorithm in \cite{DOCS} selects the object which is expected to decrease the entropy of the confidence the most.
QASCA \cite{zheng2015qasca} chooses an object which is likely to most increase the accuracy. 
\eat{QASCA has, however, two drawbacks.
Since it computes the accuracy improvement by using a sampling-based method,
the assignment is sensitive to the sampled results.
In addition, QASCA does not consider the number of claimed values collected so far and thus the estimation may not be accurate.}
Since QASCA outperforms AskIt in the experiments presented in \cite{zheng2015qasca,DOCS}, we do not consider AskIt in our experiments.
In \cite{ho2013adaptive}, task assignment for binary classification was investigated but it is not applicable to our problem to find the correct value among multiple conflicting values.
Meanwhile, the task assignment algorithm is proposed in \cite{mavridis2016using} for the case when the required skills for each task and the skill set of every worker is available.
However, it is not applicable to our problem.
\red{A task assignment algorithm proposed in \cite{fan2015icrowd} assigns every object to a fixed number of workers.
However, since we already have claimed values from sources, we do not have to assign all objects to workers.}

\ifshowtmpgraphs
\begin{figure}[tb]
	\centering
	\hspace{-0.12in}
	\subfloat[\birthplaces]{
		\includegraphics[width=1.39in]{birthPlaces_q5_w10_lb0_7_ub0_8_scF/Accuracy_MAX_ENTROPY_nolegend1}%
		\label{exp:accu_bp}}
	\hspace{-0.1in}
	\subfloat[\heritages]{
		\includegraphics[width=2.04in]{whc_q5_w10_lb0_7_ub0_8_scF/Accuracy_MAX_ENTROPY1}%
		\label{exp:accu_whc}}
	\hspace{-0.11in}
	\vspace{-0.1in}
	\caption{\acc of the crowdsourced truth discovery algorithms}
	\label{exp:crowd_accuracy}
	\vspace{-0.1in}
\end{figure}

\begin{figure}[tb]
	\centering
	\hspace{-0.12in}
	\subfloat[\birthplaces]{\includegraphics[width=1.39in]{birthPlaces_q5_w10_lb0_7_ub0_8_scF/AccuracyGen_MAX_ENTROPY_nolegend1}%
		\label{exp:accugen_bp}}
	\hspace{-0.09in}
	\subfloat[\heritages]{\includegraphics[width=2.04in]{whc_q5_w10_lb0_7_ub0_8_scF/AccuracyGen_MAX_ENTROPY1}%
		\label{exp:accugen_whc}}
	\hspace{-0.11in}
	\vspace{-0.1in}
	\caption{\accgen of the crowdsourced truth discovery algorithms}
	\label{exp:crowd_accuracygen}
	\vspace{-0.1in}
\end{figure}

\begin{figure}[t!]
	\centering
	\hspace{-0.12in}
	\subfloat[\birthplaces]{\includegraphics[width=1.39in]{birthPlaces_q5_w10_lb0_7_ub0_8_scF/averagedistance_MAX_ENTROPY_nolegend1}%
		\label{exp:avgdist_bp}}
	\hspace{-0.07in}
	\subfloat[\heritages]{\includegraphics[width=2.04in]{whc_q5_w10_lb0_7_ub0_8_scF/averagedistance_MAX_ENTROPY1}%
		\label{exp:avgdist_whc}}
	\hspace{-0.11in}
	\vspace{-0.1in}
	\caption{\avgdist of the crowdsourced truth discovery algorithms}
	\label{exp:crowd_avgdist}
	\vspace{-0.1in}
\end{figure}

\fi

\section{Conclusion}
\label{sec:conclusion}
In this paper, we first proposed a probabilistic model for truth inference to utilize the hierarchical structures in claimed values and an inference algorithm for the model.
Furthermore, we proposed an efficient algorithm to assign the tasks in crowdsourcing platforms.
The performance study with real-life datasets confirms the effectiveness of the proposed algorithms.


\ifshowtmpgraphs
\begin{figure}[t]
	\centering
	\hspace{-0.12in}
	\subfloat[\birthplaces]{
		\includegraphics[width=1.39in]{birthPlaces_q5_w10_lb0_7_ub0_8_scF/Accuracy_MAX_ENTROPY_nolegend2}%
		\label{exp:accu_bp2}}
	\hspace{-0.1in}
	\subfloat[\heritages]{
		\includegraphics[width=2.04in]{whc_q5_w10_lb0_7_ub0_8_scF/Accuracy_MAX_ENTROPY2}%
		\label{exp:accu_whc2}}
	\hspace{-0.11in}
	\vspace{-0.1in}
	\caption{\acc of the crowdsourced truth discovery algorithms}
	\label{exp:crowd_accuracy2}
	\vspace{-0.1in}
\end{figure}

\begin{figure}[tb]
	\centering
	\hspace{-0.12in}
	\subfloat[\birthplaces]{\includegraphics[width=1.39in]{birthPlaces_q5_w10_lb0_7_ub0_8_scF/AccuracyGen_MAX_ENTROPY_nolegend2}%
		\label{exp:accugen_bp2}}
	\hspace{-0.09in}
	\subfloat[\heritages]{\includegraphics[width=2.04in]{whc_q5_w10_lb0_7_ub0_8_scF/AccuracyGen_MAX_ENTROPY2}%
		\label{exp:accugen_whc2}}
	\hspace{-0.11in}
	\vspace{-0.1in}
	\caption{\accgen of the crowdsourced truth discovery algorithms}
	\label{exp:crowd_accuracygen2}
	\vspace{-0.1in}
\end{figure}

\begin{figure}[t!]
	\centering
	\hspace{-0.12in}
	\subfloat[\birthplaces]{\includegraphics[width=1.39in]{birthPlaces_q5_w10_lb0_7_ub0_8_scF/averagedistance_MAX_ENTROPY_nolegend2}%
		\label{exp:avgdist_bp2}}
	\hspace{-0.07in}
	\subfloat[\heritages]{\includegraphics[width=2.04in]{whc_q5_w10_lb0_7_ub0_8_scF/averagedistance_MAX_ENTROPY2}%
		\label{exp:avgdist_whc2}}
	\hspace{-0.11in}
	\vspace{-0.1in}
	\caption{\avgdist of the crowdsourced truth discovery algorithms}
	\label{exp:crowd_avgdist2}
	\vspace{-0.1in}
\end{figure}
\fi
\begin{acks}
	We appreciate the reviewers for providing their insightful comments.
	This research was supported by Next-Generation Information Computing Development Program through the National Research Foundation of Korea(NRF) funded by the Ministry of Science, ICT (NRF-2017M3C4A7063570).
	This research was also supported by Basic Science Research Program through the National Research Foundation of Korea(NRF) funded by the Ministry of Education(NRF-2016R1D1A1A02937186).
\end{acks}

\bibliographystyle{ACM-Reference-Format}
\bibliography{ref-short}
\end{document}